\theoremstyle{plain}
\newtheorem{prop}{Proposition}[section]
\newtheorem{theorem}[prop] {Theorem}
\theoremstyle{definition}
\newtheorem{lemma}[prop]{Lemma}
\newtheorem{cor}[prop]{Corollary}
\theoremstyle{remark}
\newtheorem*{remark}{Remark}
\newtheorem*{example}{Example}
\def \({\left(}
\def \){\right)}
\def \nn{\nonumber}
\def \bs{\begin{slide} }
\def \es{\end{slide} }
\def \bea{\begin{eqnarray}}
\def \eea{\end{eqnarray}}
\def \bes{\begin{eqnarray*}}
\def \ees{\end{eqnarray*}}
\def \nab{\nu^0_{a,b}}
\def \muxx{\mu_{xx}}
\def \muxy{\mu_{xy}}
\def \nuxy{\nu_{xy}}
\def \nur{\nu^0_{\R}}
\newcommand{\cI}{{\mathcal I}}
\newcommand{\N}{\mathbb{N}}
\newcommand{\R}{\mathbb{R}}
\newcommand{\Z}{\mathbb{Z}}
\renewcommand{\P}{\mathbb{P}}
\newcommand{\E}{\mathbb{E}}
\def\K{\mathbb{K}}
\newcommand{\dd}{\mathrm{d}} 
\newcommand{\eps}{\varepsilon}
\newcommand{\la}{\langle}
\newcommand{\ra}{\rangle}
\newcommand{\sgn}{\mathrm{sgn}} 
\newcommand{\vect}[1]{\boldsymbol{#1}}  
\newcommand{\be}{\begin{equation}}
\newcommand{\ee}{\end{equation}}
\DeclareMathOperator{\Tr}{Tr}
\DeclareMathOperator{\card}{card} 
\newcommand{\Hi}{\mathcal{H}} 
\title{Wigner crystallization in the quantum 1D jellium at all densities}
\author{S. Jansen} 
\address{Fakult{\"at} f{\"u}r Mathematik, Ruhr-Universit{\"a}t Bochum, 44780 Bochum, Germany} 
\email{sabinecjansen@gmail.com}
\author{P. Jung}
\address{University of Alabama, Birmingham, USA} 
\email{pjung@uab.edu}
\date{June 26, 2013}
\begin{document}

\maketitle

\begin{abstract}

The jellium is a model, introduced by Wigner (1934), for a gas of electrons moving in a uniform neutralizing background of positive charge. Wigner suggested that the repulsion between electrons might lead to a broken translational symmetry.
For classical one-dimensional systems this fact was proven by Kunz (1974), while in the quantum setting,
Brascamp and Lieb (1975) proved translation symmetry breaking  at low densities. Here, we prove translation symmetry breaking for the quantum one-dimensional jellium at all densities.\\

\noindent
\footnotesize{Keywords: Wigner crystal, jellium, quantum Coulomb system, translation symmetry breaking. \\
AMS Subject classification:  82B10.
}

\end{abstract}

\tableofcontents

\section{Introduction}
In a landmark paper \cite{wigner1934interaction}, Wigner introduced the jellium model for a gas of electrons and predicted that when the potential energy of the system overwhelms the kinetic energy,  the electrons would form a ``close-packed lattice configuration''.  We are interested in the one-dimensional quantum jellium (using the potential $-|x|$) which models uniformly charged parallel sheets which are able to move in the transverse direction.  Alternatively, one can consider this as a model of electrons inside a very thin insulated conducting wire.  In the case of wires which are not insulated (e.g., with interaction $1/r$), Schulz~\cite{schulz1993wigner} predicted that the crystallization could still persist in the weaker form of quasi long-range order. Deshpande and Bockrath~\cite{deshpande2008one} recently observed Wigner-crystal type behavior in experiments on carbon nanotubes.

Wigner crystallization was proven for the \emph{classical} one-dimensional jellium in \cite{kunz1974one, brascamp1975some} by showing periodicity of the one-point correlation functions at almost all densities and at all low electron densities (high spacing).  Later, \cite{aizenman1980structure, aizenman2001bounded} used ergodic-theoretic arguments to show crystallization at all densities.
This argumentation was extended to the classical quasi-one-dimensional\footnote{One infinite dimension crossed with a compact manifold.} jellium by \cite{aizenman2010symmetry}.

Wigner's original model was of course in the quantum setting, and  Kunz states that ``the quantum case (in one-dimension) might be rewarding for the following reasons: whereas a crystalline phase will certainly survive in the strong coupling regime, quantum effects become important in the weak coupling limit. But in this regime classically the particles already tend to be delocalized, so that it is not excluded that the uncertainty and the exclusion principle will keep the particles sufficiently far apart that they go into a gas phase. But evidently the presence or absence of such a transition remains to be proved.''\footnote{In the \emph{strong coupling} regime ($\beta /\rho \gg 1$), the potential energy dominates the entropy and entropic fluctuations should not destroy the periodicity. At weak coupling, this is less clear, especially since the quantum system has more fluctuations than the classical system (due to fluctuations of Brownian bridges in the Feynman-Kac picture).}
Only a year later \cite{brascamp1975some} proved, using the now well-known Brascamp-Lieb inequality for Gaussian measures,  crystallization for the one-dimensional quantum system at sufficiently low densities (again, through
the periodicity of the one-particle density).
In this work, we generalize and combine some arguments of \cite{kunz1974one} and \cite{aizenman1980structure} to prove crystallization for the quantum one-dimensional jellium at all densities.
The main tool we use is the Krein-Rutman theorem (a generalization of Perron-Frobenius) applied to a Ruelle transfer operator.


In the next section, we introduce the model and state our main results. In Section \ref{sec:tools}, we present some tools and lemmas that are used in the  proofs of the main results.  The final two sections are devoted to those proofs.  In the appendix, we present a connection between our model and a family of non-colliding Ornstein-Uhlenbeck bridges.
\section{Model and results}

In this section we formulate the model and results in the language of quantum statistical mechanics. The path integral formulation and associated probabilistic setup for systems of non-colliding Gaussian bridges are given in the next section.

Consider $N$ particles of charge $-1$ each, with positions $$x_1,\ldots, x_N \in [a,b] \subset \R.$$ The charges interact with each other through the one-dimensional Coulomb potential
$V(x):= - |x|$, $x\in \R$. Note the distributional  identity $- V''= 2 \delta_0$.  In addition, there is a neutralizing  background of homogeneous charge density $\rho =N/(b-a)$; the inverse density  $\lambda=\rho^{-1}$ denotes the typical spacing between the electrons.  The total potential energy of the system is
\begin{multline}\label{eq:potential}
    U(x_1,\ldots,x_N) := -\sum_{1 \leq j \leq k \leq N} |x_j- x_k| + \rho \sum_{j=1}^N \int_a^b |x_j - x| \dd x \\
        - \frac{\rho^2}{2} \int_a^b \int_a^b |x-x'| \dd x \dd x'.
\end{multline}

The Hilbert space $\Hi_N$ for $N$ fermions on the line $[a,b]$ is the space of square-integrable complex-valued functions $f\in L^2([a,b]^N)$ that are antisymmetric,
 i.e., $f(x_{\sigma(1)},\ldots, x_{\sigma(N)}) = \sgn (\sigma)\, f(x_1,\ldots,x_N)$ for all permutations $\sigma$ of $\{1,\ldots,N\}$.
The quantum-mechanical Hamiltonian is
\be
    H_N:= - \frac{1}{2}\sum_{j=1}^N \frac{\partial^2}{\partial x_j^2} + U(x_1,\ldots,x_N).
\ee
Here we use, as usual, the same letter $U$ for the function and for the associated multiplication operator, and we have chosen units such that $\hbar^2/m=1$.   We take Dirichlet boundary conditions, i.e., $H_N$ is the closure of the operator with domain $C_0^\infty((a,b)^N) \cap \Hi_N$. It is well-known that $H_N$ is self-adjoint (see \cite[Thm X.28]{reed1980methods2} and \eqref{eq:potential2} below).

Fix an inverse temperature $\beta>0$. The canonical partition function is
\be
    Z_N(\beta):= \Tr \exp( - \beta H_N).
\ee
In the next section we shall express $Z_N(\beta)$,  via the Feynman-Kac formula, as the expectation of a functional of Brownian bridges. Standard arguments show that $\exp(-\beta H_N)$ is an integral operator with continuous kernel $\exp(-\beta H_N)(x_1,\ldots,;\ldots,y_N)$ \cite{ginibre1965reduced, reed1980methods2}.

 For $n\in \{1,\ldots,N\}$, the \emph{$n$-point correlation function} or \emph{reduced density matrix}  \cite{ginibre1965reduced, bratteli1981operator}  is the  function of  $\vect{x}= (x_1,\ldots,x_n)$ and $\vect{y}=(y_1,\ldots,y_n)$, proportional to
\be
    \rho_n^N(\vect{x};\vect{y}) \propto
        \int_{[a,b]^{N-n}} e^{-\beta H_N} (\vect{x},\vect{x'};\vect{y},\vect{x'}) \dd \vect{x'}
\ee
with proportionality constant fixed by the condition
\be \label{eq:corr-normalization}
    \int_{[a,b]^n} \rho_n^N(\vect{x};\vect{x}) \dd x_1\cdots \dd x_n = N(N-1) \cdots (N-n+1).
\ee
The \emph{one-particle density} $\rho_1^N (x;x)$ (corresponding to $n=1$ and $x=y$) represents the expected number of particles in an infinitesimal neighborhood of $x$.
Eq.~\eqref{eq:corr-normalization} becomes $\int_a^b \rho_1^N(x;x) \dd x = N$ and expresses that there are $N$ particles in $[a,b]$.
The reduced density matrices inherit the antisymmetry, e.g.,
\be  \label{eq:corr-antisymmetry}
    \rho_2^N(x_1,x_2;y_1,y_2) = - \rho_2^N(x_1,x_2;y_2,y_1) = \rho_2^N(x_2,x_1;y_2,y_1).
\ee

We are now ready to state our results on the thermodynamic limit: we fix
the background charge density $\rho$ and inverse temperature $\beta$.  Recalling that \mbox{$N = \rho(b-a)$,} we let
 \be \label{eq:thermolim}
    \quad a\to - \infty, \quad b \to \infty \quad\text{(thus $N\to\infty$).}
\ee

Our first result concerns the asymptotics, under \eqref{eq:thermolim}, of the partition function $Z_N(\beta)$. Recall that the free energy
$$f(\beta,\rho):= - \lim \frac{1}{\beta N} \log Z_N(\beta),$$
was shown to exist in \cite{lieb1976thermodynamic} (their proof is with respect to the Coulomb potential $|x|^{-1}$, but as they state on p. 292, applies to $-|x|$ as well).

\begin{theorem}[Surface corrections] \label{thm:free-energy}
    The following limit along~\eqref{eq:thermolim} exists in $\R$:
    \be \nn
        \beta s(\beta,\rho):= - \lim \Bigr(\log Z_N(\beta) + N f(\beta,\rho) \Bigr).
    \ee
\end{theorem}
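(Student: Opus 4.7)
My approach is to realize $Z_N(\beta)$ as a matrix element of a high power of a compact, positivity-preserving Ruelle-type transfer operator and then extract the leading asymptotics by spectral analysis.

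The first step is a Feynman-Kac representation together with the Karlin-McGregor non-colliding formula: after antisymmetrization, $Z_N(\beta)$ becomes an integral over $N$ ordered Brownian bridges $B_1\le\cdots\le B_N$ of length $\beta$, weighted by $\exp(-\int_0^\beta U(B_1(\tau),\ldots,B_N(\tau))\,\dd\tau)$ and subject to non-collision. The second step is Kunz's decoupling trick adapted to the path level: using the distributional identity $-V''=2\delta_0$ and integration by parts, the classical jellium energy rewrites as a sum of one-body confining terms plus nearest-neighbour contributions depending only on $x_{k+1}-x_k$, and the same manipulation carried out pointwise in the imaginary-time parameter $\tau$ shows that the functional $\int_0^\beta U\,\dd\tau$ only couples consecutive bridges $(B_k,B_{k+1})$. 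Combined with the Markov property of the bridge laws and the lattice-centered shift $u_k = B_k - (k-\tfrac12)\lambda$, this recasts $Z_N(\beta)$ in the form
\[
Z_N(\beta) = \la \omega_L,\, T^{N-1}\omega_R\ra
\]
for a strictly positive integral kernel $T$ acting on a suitable weighted $L^2$ space, with boundary states $\omega_L,\omega_R$ encoding the leftmost and rightmost cells.

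The third step is spectral. The Gaussian-like decay supplied by the background confinement makes $T$ Hilbert-Schmidt, hence compact, and the strict positivity of its kernel gives irreducibility; the Krein-Rutman theorem then yields a simple dominant eigenvalue $\mu>0$, a strictly positive principal eigenfunction $\phi$, and a spectral gap $r<\mu$ to the rest of the spectrum. Expanding in this decomposition produces $Z_N(\beta) = C\mu^N + O(r^N)$ with $C=\mu^{-1}\la\omega_L,\phi\ra\la\phi,\omega_R\ra$, so
\[
\log Z_N(\beta) = N\log\mu + \log C + o(1).
\]
Matching the extensive part with the free energy of~\cite{lieb1976thermodynamic} identifies $N\log\mu$ with $-\beta N f(\beta,\rho)$, and the finite surface constant $\log C$ delivers the claimed limit, with $\beta s(\beta,\rho) = -\log C$.

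The main obstacle I foresee is the second step: showing that the \emph{quantum} energy genuinely admits a nearest-neighbour decomposition at the path level, not only for the pair interactions $-|x_j-x_k|$ (where Kunz's rearrangement is essentially algebraic once the bridges are ordered) but also for the uniform-background terms $\rho\int_a^b|x_j-x|\,\dd x$, which naively couple every particle to the whole box. A careful resummation against the Wigner lattice is required to localize these background contributions into one-body and nearest-neighbour pieces, and, once the transfer-operator form is in place, one must furthermore verify that $\omega_L$ and $\omega_R$ have strictly positive overlap with $\phi$, since otherwise $C=0$ and the surface correction would diverge to $-\infty$ rather than converge in $\R$.
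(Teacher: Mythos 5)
Your proposal is essentially the paper's proof: Feynman--Kac on the Weyl chamber of non-colliding bridges, decoupling of the jellium energy, a Ruelle transfer operator $\K$ with $Z_N(\beta)\propto\la F,\K^{N-1}G\ra$, and Krein--Rutman with a spectral gap to extract $N\log z_0$ plus a finite boundary term. The only (harmless) discrepancies are that the paper's decoupling (Baxter's identity~\eqref{eq:potential2}) yields purely one-body harmonic terms $\rho(x_j-m_j)^2$ with all interparticle coupling residing in the non-collision indicator, so the transfer kernel is an indicator --- irreducible and Hilbert--Schmidt but not strictly positive --- which is still enough for Krein--Rutman, and the positivity of the boundary overlaps follows since $F,G\ge 0$ are nontrivial and $\Psi_0>0$.
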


One can rewrite the above as  $$- \beta^{-1}\log Z_N(\beta) = N f(\beta,\rho) + s(\beta,\rho) + o(1)$$ in order to see that
the finite-size corrections to the free energy are bounded.
The free energy itself can be written, as we will see in Section \ref{sec:free energy}, as
\be\label{eq:free-energy-z0}
     f(\beta,\rho) = \frac{1}{12\rho} + \Bigl( \sqrt{\frac{\rho}{2}} + \frac{1}{\beta} \log ( 1- e^{- \beta \sqrt{2\rho}} )\Bigr) - \frac{1}{\beta}    \log z_0(\beta,\rho).
\ee
The first term is the minimum of the potential energy, the second term is the free energy for $N$ independent harmonic oscillators, and $z_0(\beta,\rho)$ is the principal eigenvalue of some Ruelle operator that encodes the ``non-collision'' of certain Gaussian bridges; the last term is small at low density since $\lim_{\rho \to 0} z_0(\beta,\rho)=1$. Eq.~\eqref{eq:free-energy-z0} is the analogue of Eq.~(17) in~\cite{kunz1974one}.

If the system is non-neutral with fixed excess charge $Q = \rho L - N$, the bulk free energy $f(\beta,\rho)$ is given by~\eqref{eq:free-energy-z0} plus the term $- Q^2/(4 \rho)$. This term   represents the interaction between charges $Q/2$ placed at distance $b-a$: in Coulomb systems the excess charge typically accumulates at the boundary. This follows by combining our proofs with arguments given in \cite{kunz1974one}; these arguments also show that the bulk reduced density matrices in Theorem~\ref{thm:corr-funct}, which we now state, are unaffected by the excess charge.

\begin{theorem}\label{thm:corr-funct}
    \begin{enumerate}
    \item[(i)]  In the limit~\eqref{eq:thermolim} along $a,b\in \lambda \Z$, all reduced density matrices have uniquely defined limits
    \be
         \rho_n(x_1,\ldots,x_n;y_1,\ldots,y_n) = \lim \rho_n^{N}(x_1,\ldots,x_n;y_1,\ldots,y_n).
    \ee
    The convergence is uniform on compact subsets of $\R^n\times \R^n$, and the reduced density matrices $\rho_n^N$ and $\rho_n$ are continuous functions.
    \item[(ii)]
    The limit is periodic with respect to shifts by $\lambda=\rho^{-1}$,
    \be
        \rho_n(x_1 - \lambda ,\ldots;\ldots, y_n - \lambda ) = \rho_n(x_1,\ldots;\ldots,y_n)
    \ee
    for all $n\in \N$ and $\vect{x},\vect{y} \in \R^n$. Furthermore for every $\theta \notin \lambda\Z$ there is some $n\in \N$ and some $\vect{x} \in \R^n$ such that $\rho_n(\vect{x}-\theta;\vect{x}-\theta) \neq \rho_n(\vect{x};\vect{x})$.
    \end{enumerate}
\end{theorem}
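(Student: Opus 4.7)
My plan is to build on the path-integral/transfer-operator machinery underlying Theorem~\ref{thm:free-energy} and extract both the thermodynamic limit and the translation symmetry breaking from the spectral properties of the associated Ruelle operator. Starting from the Feynman-Kac representation of $e^{-\beta H_N}(\vect{x},\vect{x'};\vect{y},\vect{x'})$ and using the Baxter-Kunz identity (which holds because $-V''=2\delta_0$), one rewrites $U$ on the Weyl chamber $\{x_1<\cdots<x_N\}$ as a sum of harmonic potentials centered at the equilibrium points $\bar x_k=a+(k-\tfrac12)\lambda$. This turns the Brownian bridges into Ornstein-Uhlenbeck bridges around the $\bar x_k$ subject only to the non-collision constraint $x_k(\tau)<x_{k+1}(\tau)$ for all $\tau\in[0,\beta]$, and this constraint is local in the particle label $k$. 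Consequently the whole object can be encoded as a matrix element of a product of transfer operators $T$ acting on gap variables $u_k=x_{k+1}-x_k$ (together with their imaginary-time endpoint data); applying Krein-Rutman to $T$ yields a simple strictly positive top eigenvalue $z_0=z_0(\beta,\rho)$, a strictly positive eigenfunction $\psi_0$, and a spectral gap $|z_1|<z_0$.

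For part (i), both $\rho_n^N(\vect{x};\vect{y})$ and $Z_N(\beta)$ can be written as matrix elements of analogous products of $T$, with the $n$ pairs of fixed arguments producing $n$ ``insertion'' kernels between the powers of $T$. Normalizing by $z_0^{N-n}$, the spectral gap collapses each long stretch of $T$'s onto the rank-one projector $\psi_0\otimes\psi_0$ up to errors of order $(|z_1|/z_0)^d$, where $d$ is the distance to the boundary. In the limit~\eqref{eq:thermolim} along $a,b\in\lambda\Z$ the boundary factors cancel between numerator and denominator, leaving a unique limit $\rho_n(\vect{x};\vect{y})$; uniformity on compact sets and continuity of $\rho_n$ come from continuity of $\psi_0$ and of the insertion kernels (which are essentially OU-bridge transition kernels), while continuity of $\rho_n^N$ is inherited from continuity of the heat kernel.

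For part (ii), $\lambda$-periodicity of $\rho_n$ is a relabeling identity: when $a,b\in\lambda\Z$, shifting every $\bar x_k$ to $\bar x_{k-1}$ is a symmetry of the infinite transfer-operator chain, hence of the limit state. To exclude a smaller invariance, suppose $\rho_n(\vect{x}-\theta;\vect{x}-\theta)=\rho_n(\vect{x};\vect{x})$ for some $\theta\notin\lambda\Z$ and all $n,\vect{x}$; combined with $\lambda$-periodicity and the continuity from (i), this forces each $\rho_n(\vect{x};\vect{x})$ to be invariant under the dense subgroup $\lambda\Z+\theta\Z\subset\R$ and hence under all translations. It then suffices to rule out full translation invariance at $n=1$, by reading off $\rho_1(x;x)$ from $\psi_0$ via the transfer-operator formula and comparing its values at an equilibrium point $\bar x_0$ and at the midpoint $\bar x_0+\tfrac12\lambda$.

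The main obstacle is precisely this last comparison, i.e.\ showing that the Krein-Rutman eigenfunction really produces a non-constant one-point density $\rho_1$. A direct explicit computation is out of reach, but the argument can be made robust by noting that constancy of $\rho_1$ would force the OU-bridge marginals to smear out uniformly over a fundamental domain of length $\lambda$, which is incompatible with the harmonic restoring force of spring constant $\rho$ coming from the Baxter-Kunz reduction; equivalently, this would contradict the bounded-variance estimates for the particle number in an interval that the transfer-operator representation provides, in the spirit of the Aizenman-Martin rigidity argument for 1D Coulomb systems.
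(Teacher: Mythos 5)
Your part (i) follows essentially the paper's route (Feynman--Kac plus Baxter's quadratic rewriting of $U$, then Krein--Rutman for the Ruelle operator with a spectral gap collapsing the bulk onto the rank-one projector), and that part is sound in outline. One detail you gloss over: for $x\neq y$ a single pair of arguments does not give a single ``insertion'' kernel but an \emph{infinite alternating sum} over open-loop winding numbers (the terms $(-1)^{k-j}\nu_{xjky}$ of Lemma~\ref{lem:reduced-matrix}), so the thermodynamic limit requires a summable, $N$-uniform bound on each term (the paper's bound~\eqref{eq:ibound}, combining the Krein--Rutman estimates with the Gaussian decay of the bridge masses) before one can exchange the limit with the sum. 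This should be made explicit, but it is fixable within your framework.

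Part (ii) contains a genuine gap. Your reduction via the dense subgroup $\lambda\Z+\theta\Z$ correctly shows that a $\theta$-invariance with $\theta\notin\lambda\Z$ would force all diagonal densities $\rho_n(\vect{x};\vect{x})$ to be invariant under \emph{all} translations; but you then claim it suffices to rule this out at $n=1$ by showing $\rho_1(x;x)$ is non-constant. That final step is exactly what is \emph{not} available at all densities: non-constancy of the one-particle density is the Brascamp--Lieb low-density result, and the paper explicitly leaves open whether the period is visible at the level of $\rho_1$ in general. Your proposed physical argument for it is also not valid: constancy of $\rho_1$ is perfectly compatible with sharply localized particles and with bounded number variance (the paper's Example with indicator wavefunctions $\chi_j$ has constant density $1$ yet a nontrivial period), so neither the ``harmonic restoring force'' nor number rigidity contradicts a flat one-point density. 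The correct way to close the argument is to use the \emph{whole family} of diagonal densities: they are the factorial moment densities of the limiting point process $\nu_\R^0$, the Gaussian tail bound on $N_\cI$ (Lemma~\ref{lem:ni-tight}) gives Carleman's condition so these moments determine the law of the particle numbers and hence the point process, and therefore translation invariance of all $\rho_n(\vect{x};\vect{x})$ would force $\nu_\R^0\circ\tau_\theta=\nu_\R^0$, contradicting the mutual singularity established in Theorem~\ref{thm:symbreak2}. Without the moment-determinacy step (or some substitute), your argument cannot be completed.
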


The translation symmetry breaking of the reduced density matrices in part (ii) of the above theorem will follow from the symmetry breaking of probability measures on point configurations in the thermodynamic limit (Theorem \ref{thm:symbreak2} below) and adequately addressing a ``moment problem''. Let us make a couple further comments concerning part (ii), but before doing so, we
 rephrase the above theorem in terms of quantum states on $C^*$ algebras.

 Let $\mathcal{A}$ be the algebra of observables for fermions on the line, i.e., $\mathcal{A}$ is the CAR algebra (canonical anticommutation rules) over the one-particle Hilbert space $L^2(\R)$ \cite[Ch. 5.2]{bratteli1981operator}.
Let $\omega_N$ and $\omega$ be the states on $\mathcal{A}$ with reduced density matrices $\rho_n^N$ and $\rho_n$, respectively. Write $\tau_x:\mathcal{A}\to \mathcal{A}$ for the action of translation by $x$ on the observables.
The following is an immediate consequence of Theorem~\ref{thm:corr-funct} (locally uniform convergence of the reduced density matrices implies convergence of states on the CAR algebra, see \cite[Ch. 6.3.4]{bratteli1981operator}).

\begin{cor} \label{cor:states}
    In the thermodynamic limit along $a,b\in \lambda \Z$,  the states $(\omega_N)$ converge weakly  to $\omega$, i.e.,
    \be
        \forall A\in \mathcal{A}:\ \lim_{N\to \infty} \omega_N(A)= \omega(A).
    \ee
    The limit state is invariant under translations by integer multiples of $\lambda$,
    and $\lambda$ is the smallest period:
    \be
        \omega \circ \tau_x = \omega\ \Leftrightarrow \ x \in \lambda \Z.
    \ee
\end{cor}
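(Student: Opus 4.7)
The plan is to reduce both assertions of the corollary to Theorem~\ref{thm:corr-funct} via the standard correspondence (see \cite[Ch.~6.3.4]{bratteli1981operator}) between states on the CAR algebra $\mathcal{A}$ over $L^2(\R)$ and their sequences of reduced density matrices $(\rho_n)_{n\geq 1}$. Concretely, a state $\omega$ is determined by its values on monomials $a^*(f_1)\cdots a^*(f_m)\, a(g_n) \cdots a(g_1)$ with $f_i, g_j \in L^2(\R)$, and these values are gotten by integrating the kernels $\rho_n(\vect{x};\vect{y})$ against tensor products of the $f_i$ and $\overline{g_j}$. This correspondence is the bridge from Theorem~\ref{thm:corr-funct} to the statement about states.

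For the weak convergence, I would first note that by Theorem~\ref{thm:corr-funct}(i) the sequence $\rho_n^N \to \rho_n$ uniformly on compact subsets of $\R^n \times \R^n$. Given a monomial built from compactly supported $f_i, g_j \in L^2(\R)$, the corresponding integrals against $\rho_n^N$ then converge to the integrals against $\rho_n$ by uniform convergence on the support. Hence $\omega_N(A) \to \omega(A)$ for every polynomial $A$ in creation/annihilation operators of compactly supported functions. Since the $\omega_N$ and $\omega$ are states (hence of norm $1$), and since such polynomials are norm-dense in $\mathcal{A}$ (approximating general $f, g \in L^2(\R)$ by compactly supported ones and using the CAR-norm estimate $\|a(f)\|, \|a^*(f)\| \leq \|f\|_{L^2}$), the convergence extends by an $\eps/3$ argument to all $A\in \mathcal{A}$.

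For the periodicity statement, I would observe that the reduced density matrices of the translated state $\omega \circ \tau_x$ are $\rho_n(x_1 - x, \ldots, x_n - x;\, y_1 - x, \ldots, y_n - x)$. The first part of Theorem~\ref{thm:corr-funct}(ii) immediately yields $\omega \circ \tau_\lambda = \omega$, and iteration gives invariance under all of $\lambda\Z$. Conversely, if $x \notin \lambda\Z$, the second part of Theorem~\ref{thm:corr-funct}(ii) provides $n$ and $\vect{x}$ with $\rho_n(\vect{x}-x;\vect{x}-x) \neq \rho_n(\vect{x};\vect{x})$, so the diagonal kernels---and hence the states---differ. Formally, evaluating against $\delta$-like approximations on these diagonal values exhibits an observable $A$ with $\omega(\tau_x A) \neq \omega(A)$, using continuity of $\rho_n$.

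There is no real obstacle here: the content is entirely in Theorem~\ref{thm:corr-funct}, and the corollary is a formality. The only step requiring care is the density argument extending pointwise convergence on monomials (with compactly supported test functions) to the entire CAR algebra, but this is a routine consequence of the fact that the states are bounded linear functionals together with the CAR norm bounds.
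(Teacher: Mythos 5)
Your proposal is correct and follows exactly the route the paper takes: the paper gives no separate proof, stating only that the corollary is an immediate consequence of Theorem~\ref{thm:corr-funct} via the standard correspondence between locally uniform convergence of reduced density matrices and weak convergence of states on the CAR algebra (citing \cite[Ch.~6.3.4]{bratteli1981operator}). Your write-up simply makes explicit the density/norm-bound argument and the use of part (ii) for the period, which is precisely what that citation is meant to cover.
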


Let us now make two remarks on further aspects of symmetry breaking.
Our first remark concerns the decay  of correlation functions and ergodicity. Part (c) of Lemma~\ref{lem:krein} below, together with arguments adapted from~\cite{kunz1974one}, shows that the $n$-particle densities $\rho_n(\vect{x};\vect{x})$ decay exponentially, e.g.,
\be \label{eq:two-point-decay}
    |\rho_2(x_1,x_2;x_1,x_2) - \rho_1(x_1;x_1) \rho_1(x_2;x_2) \bigr|  \leq C\exp(- \alpha |x_1-x_2| \bigr)
\ee
for suitable $\alpha, C>0$.
 We expect that in addition there is no off-diagonal long range order\footnote{A path integral picture is given in Figure 3 at the end of Section~\ref{sec:pi}.
 One must show that the relative probability of an ``open loop'' from $x$ to $y$ of winding number $n$, versus that of $n$ Brownian bridges with the same starting and ending points,
 goes to $0$.  This is intuitive since the probability that a Brownian motion is at a distance of order $n$ at time $n$ decays exponentially.}.  In other words we expect, for example, that $\rho_1(x;y) \to 0$ as $|y-x|\to \infty$; however, a proof would draw us too far from the objective at hand. But if this holds true, the limiting state $\omega$ is ergodic with respect to shifts by integer multiples of $\lambda$; in the absence of a proof, we know only that the restriction of $\omega$ to a commutative subalgebra of observables
-- described by the probability measure $\nu_\R^0$ defined below --  is ergodic (in fact, exponentially mixing).

The second remark concerns the appearance of reduced density matrices $\rho_n$ with $n\geq 2$ in part~(ii) of Theorem~\ref{thm:corr-funct}.
Brascamp and Lieb \cite{brascamp1975some} showed that at low densities, i.e. $\rho^{3/2} /\tanh(\beta \sqrt{\rho/2})$  sufficiently small\footnote{The Hamiltonian for a single charged particle is a harmonic oscillator. ``Small density"  means that the fluctuations of the harmonic oscillator / Ornstein-Uhlenbeck bridges are small compared to the typical interparticle spacing, $\sigma(\beta,\rho) \ll  \lambda = \rho^{-1}$ (see Appendix~\ref{app:ornstein}).},
the one-particle density $\rho_1(x;x)$ is a periodic function of $x$ with smallest period $\lambda$ (see also \cite[p. 314]{kunz1974one}).
Our result holds for all positive $\rho$ and $\beta$, but  leaves open the possibility that the period $\lambda$ is not visible at the level of the one-particle density -- in principle, a quantum state can have a non-trivial period but constant one-particle density, as illustrated by the next example.

\begin{example}
Let $\rho=\lambda=1$, $a=-b$, and let $\Psi_N$ be the many-fermion state given by the antisymmetrized product of $\chi_{-b},\ldots,\chi_{b-1}$ where $\chi_j(x)$ ($j\in \Z$) is the indicator that $x$ is in $[j,j+1)$. In the limit $b\to \infty$, the state is clearly not shift-invariant since the probability of seeing more than one particle in a small interval of width $\eps$ is zero if the interval is contained in a cell $[j,j+1)$, and non-zero if it intersects two distinct cells. Nevertheless, the one-particle density  $\rho_1(x;x) =\sum_{j=-\infty}^\infty \chi_j(x) =1$ is constant.
\end{example}

The quantum state on $\Hi_N$ determines a probability measure on point configurations as follows.
Denote the Weyl chamber by
\be \label{def:wc1}
W_N(a,b):= \{(x_1,\ldots,x_N) \mid a<x_1<\cdots <x_N<b\}.
\ee
Because of Eqs. ~\eqref{eq:corr-normalization} and ~\eqref{eq:corr-antisymmetry},
the $N$-particle density $\rho_N^N(\vect{x};\vect{x})$ integrates to $N!$ on $[a,b]^N$ and thus
\be\label{eq:density}
\int_{W_N(a,b)} \rho_N^N(\vect{x};\vect{x}) \dd\vect{x}=1.
\ee
We can thus think of
$\rho_N^N(\vect{x};\vect{x})$ as a probability measure on point configurations in $[a,b]$ (or equivalently, a point process).  To emphasize this perspective we rename this measure
 on configurations in $[a,b]$ as
\be\label{def:nab}
  \nab(\dd\vect{x})\equiv\rho_N^N(\vect{x};\vect{x}) \dd \vect{x},
\ee
we identify vectors  $\vect{x}$ with sets $\{x_1,\ldots,x_N\}$, and view $\nu_{a,b}^0$ as a measure on the space $\Omega$ consisting of locally finite subsets of $\R$ (every compact set contains at most finitely many points $x_j$).
The space $\Omega$ is equipped with the shift operator
$$\tau_u \vect{x} := \{ u+ x_i\mid x_i\in \vect{x}\}$$
and the topology (and Borel $\sigma$-algebra) generated by the continuous functionals $\vect{x}\mapsto \sum_{x_j\in \vect{x}} f(x_j)$ where $f$ runs over the continuous functions on $\R$ with compact support.
As always, we are interested in these measures in their thermodynamic limit~\eqref{eq:thermolim}.


\begin{theorem} [Symmetry breaking: point processes]  \label{thm:symbreak2}
    In the  limit~\eqref{eq:thermolim} along $a,b\in \lambda \Z$,  the measures $\nab$ converge weakly to a limiting probability measure $\nur$.
        The  measure $\nur$ is invariant under shifts $\tau_{n\lambda }$ of integer multiples of $\lambda $. Shifting  by a  non-integer multiple yields a measure which is singular with respect to $\nur$:
         $$u\notin \lambda  \Z \quad\text{implies}\quad \nur \circ\tau_{  u} \perp \nur.$$
\end{theorem}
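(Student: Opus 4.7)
My plan is to work entirely within the path-integral / Ruelle transfer operator framework developed in Section~\ref{sec:tools}. In that setup, $\nab$ arises as the time-zero marginal of a system of non-colliding Ornstein-Uhlenbeck bridges anchored at the classical minimum-energy positions $a+k\lambda$, $k=1,\ldots,N$, and correlation functions are expressed through iterates of a transfer operator $T$ whose principal eigenvalue is $z_0(\beta,\rho)$. Lemma~\ref{lem:krein} below (Krein-Rutman) supplies a simple dominant eigenvalue with a strictly positive eigenfunction $\psi$ and a spectral gap, so that $T^k/z_0^k$ converges exponentially fast to a rank-one projection.

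For weak convergence of $\nab$, I would evaluate the Laplace functional $\E_{\nab}\bigl[\exp(-\sum_{x_j} f(x_j))\bigr]$ with $f\geq 0$ continuous and of compact support. The transfer-operator representation turns this into a product of finitely many ``modified'' kernels in a window around $\supp f$, sandwiched between long stretches of unmodified $T$'s. Sending $a\to-\infty,\, b\to\infty$ along $\lambda\Z$, the outer stretches collapse via Lemma~\ref{lem:krein} onto the rank-one projection $\psi\otimes\psi^*$, producing a finite matrix element that depends only on $f$. This identifies a unique weak limit $\nur$. The invariance $\nur\circ\tau_\lambda=\nur$ is then a direct consequence: translation invariance of the Hamiltonian gives $\nab\circ\tau_\lambda=\nu^0_{a+\lambda,b+\lambda}$, and both sides converge to $\nur$ when $a,b\in\lambda\Z$. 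Ergodicity of $\nur$ under $\tau_\lambda$ comes from the spectral gap in Lemma~\ref{lem:krein}(c), which yields exponential decay of truncated $n$-point correlations as in~\eqref{eq:two-point-decay} and hence strong mixing.

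The singularity $\nur\circ\tau_u\perp\nur$ for $u\notin\lambda\Z$ I would handle ergodic-theoretically. Since $\tau_u$ commutes with $\tau_\lambda$, $\nur\circ\tau_u$ is also $\tau_\lambda$-invariant, and rerunning the spectral-gap argument after a coordinate shift shows it is $\tau_\lambda$-ergodic. Two $\tau_\lambda$-ergodic probability measures are either identical or mutually singular, so it suffices to show $\nur\neq\nur\circ\tau_u$. This is the main obstacle: the example preceding the theorem rules out the naive approach of comparing $\rho_1(x;x)$ to $\rho_1(x-u;x-u)$, since the one-particle density could a priori be constant. I would instead exploit the explicit anchor labeling in finite volume. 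The particle $X_k$ nearest the $k$-th anchor $a+k\lambda$ has, in the joint limit $k,N-k\to\infty$, a non-degenerate law determined by $\psi$, and applying $\tau_u$ shifts this law by $u$. Translating this labeled statement into a $\tau_\lambda$-invariant, measurable observable on the unlabeled configuration space $\Omega$ -- for instance, a higher-order correlation $\rho_n(\vect{x};\vect{x})$ evaluated at a configuration $\vect{x}$ for which the eigenfunction $\psi$ produces a manifestly non-constant function of the registration of $\vect{x}$ relative to $\lambda\Z$ -- then yields a bounded cylinder expectation distinguishing $\nur$ from $\nur\circ\tau_u$, closing the argument.
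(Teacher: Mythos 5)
Your setup (transfer operator, Krein--Rutman spectral gap, collapse of the outer stretches onto the rank-one projection, and $\lambda$-periodicity from translating the anchors) matches the paper's proof of existence and invariance, and the reduction ``two $\tau_\lambda$-ergodic measures are either equal or mutually singular'' is sound. The gap is in the last step: you must actually exhibit an observable of the \emph{unlabeled} point process that distinguishes $\nur$ from $\nur\circ\tau_u$, and your proposal only asserts that some higher-order correlation $\rho_n(\vect{x};\vect{x})$ will do the job ``for a configuration for which $\psi$ produces a manifestly non-constant function of the registration''. That is precisely the hard part, and it cannot be finessed: the fact that the particle near the $k$-th anchor has a nondegenerate law is a statement about the \emph{labeled} process, and the labeling (which particle belongs to which anchor) is not a function of the configuration. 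Indeed, in the paper the existence of an $n$ and $\vect{x}$ with $\rho_n(\vect{x}-\theta;\vect{x}-\theta)\neq\rho_n(\vect{x};\vect{x})$ is \emph{deduced from} the mutual singularity of Theorem~\ref{thm:symbreak2} (via Lemma~\ref{lem:ni-tight} and a Carleman moment argument in the proof of Theorem~\ref{thm:corr-funct}(ii)), not used to prove it; your route reverses this logical order without supplying the missing input.

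The paper closes the gap by constructing an explicit configuration functional that reads off the registration mod $\lambda$: label the points of $\omega$ as $\cdots<x_0<0\leq x_1<x_2<\cdots$ and consider
\[
    \lim_{n\to\infty}\exp\Bigl(\mathrm{i}\tfrac{2\pi}{\lambda}\tfrac{1}{n}\sum_{j=1}^n\bigl(x_j(\omega)-(j-\tfrac12)\lambda\bigr)\Bigr).
\]
Using $\E Y_j=0$ (from the reflection symmetry of the limiting single-bridge law $\Psi_0(-\cdot)\Psi_0(\cdot)\,\dd\nu$) and the mixing supplied by the spectral gap, this limit equals $1$ $\nur$-a.s., while under $\nur\circ\tau_u$ it equals $e^{2\pi\mathrm{i}u/\lambda}\neq 1$; the two measures therefore concentrate on disjoint events, which gives $\nur\circ\tau_u\perp\nur$ directly (without even invoking the ergodic dichotomy). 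If you want to salvage your plan, you should replace the appeal to an unspecified $\rho_n$ by this (or an equivalent) ergodic-average observable; note also the small technical point the paper handles, namely that the relabeling from the anchor-indexed $X_j$ to the configuration-indexed $x_j$ shifts the sum by an a.s.\ finite integer multiple of $\lambda$, which is invisible after exponentiating with period $\lambda$.
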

Here ``$\perp$'' means as usual that the measures are mutually singular, and weak convergence means $\int f \dd \nu_{a,b}^0 \to \int f \dd \nu_\R^0$ for every continuous bounded function $f:\Omega \to \R$.

\section{Tools of the trade}\label{sec:tools}
\subsection{Path integrals}\label{sec:pi}

Recalling that $\rho={N}/{(b-a)}$, let
\be  \label{eq:xj}
    m_j := a + \lambda \(j-\frac{1}{2}\)\quad (j=1,\ldots,N).
\ee
It was noticed in \cite[Eq. 18]{baxter1963statistical} that for $a=0$ and  $x_1\leq \cdots \leq x_N$, the two sums in the potential energy of a configuration given in \eqref{eq:potential} can be written as
    \bea\label{eq:potential2}
        U(x_1,\ldots,x_N) &=&-\sum_{j=1} (2j-1-N) x_j + \rho \sum_{j=1}^N (x_j^2 -bx_j) +\text{const.}\nn\\
        &=&
        \rho \sum_{j=1}^N \bigl( x_j
        - m_j\bigr)^2 + \frac{N}{12 \rho}.
    \eea
An analogous identity holds for $a\neq 0$, and together with the Feynman-Kac formula,  allows us now to relate $\nab$ in \eqref{def:nab} to a Gaussian measure conditioned on a Weyl chamber.

Let $E$ be the space of continuous paths $\gamma:[0,\beta] \to \R$, equipped with the topology of uniform convergence and the corresponding Borel $\sigma$-algebra.
Let $\muxy$ be the measure on $E$ given by the non-normalized Brownian bridge measure, with total mass
$\muxy(E)= \P_\beta(x,y)$, where $\P_t(x,y)$ is the transition semi-group (heat kernel) for a standard Brownian motion in $\R$ with generator $\frac{1}{2}\Delta$.
Thus under $\muxy$, for all $0<t_1< \cdots<t_r <\beta$, the vector $(\gamma(t_1),\ldots,\gamma(t_n))$ has density
\be \nn
    \P_{t_1}(x,x_1) \P_{t_2-t_1}(x_1,x_2)\cdots \P_{\beta- t_r}(x_r,y).
\ee
and for $\muxy$-almost all $\gamma\in E$, $\gamma(0)=x$ and $\gamma(\beta) = y$.
Write $$\muxy(f) := \int_E f(\gamma) \dd \muxy(\gamma)$$
and  generalize the Weyl chamber \eqref{def:wc1} to a Weyl chamber for paths
\begin{align} \nn
     W_N^\beta(a,b)&:= \{ (\gamma_1,\ldots,\gamma_N)\in E^N \mid \forall t\in [0,\beta]\
             a<\gamma_1(t)<\cdots < \gamma_N(t)<b \},
\end{align}
see Figure~\ref{fig:bridges}.

\begin{lemma}[Feynman-Kac formula]  \label{lem:feynmankac}
    We have
    \bea\nn
        &&Z_N(\beta) = \\
        &&\int_{W_N(a,b)}  \mu_{x_1x_1} \times \cdots \times \mu_{x_N x_N}\Bigl(  e^{- \int_0^\beta U(\gamma_1(t),\ldots,\gamma_N(t))\dd t} \mathbf{1}_{W_N^\beta(a,b)}(\vect{\gamma}) \Bigr) \dd x_1\cdots \dd x_N \nn
    \eea
    and for all $\vect{x},\vect{y}\in W_N(a,b)$,
    \begin{multline}\nn
        \rho_N^N(\vect{x};\vect{y}) = \frac{1}{Z_N(\beta)}\,
            \mu_{x_1 y_1}\times \cdots \times \mu_{x_N y_N}
            \Bigl(
     e^{- \int_0^\beta U(\gamma_1(t),\ldots,\gamma_N(t)) \dd t} \mathbf{1}_{W_N^\beta(a,b)} (\vect{\gamma}) \Bigr).
    \end{multline}
\end{lemma}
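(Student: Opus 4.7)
The plan is to assemble the formula from three standard pieces: the scalar Feynman--Kac formula on the cube, projection onto the antisymmetric subspace $\Hi_N$, and a reflection / Karlin--McGregor argument that collapses the resulting signed sum over permutations to the non-crossing indicator $\mathbf{1}_{W_N^\beta(a,b)}$.

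I first drop antisymmetry and consider $\hat H_N = -\tfrac12\Delta+U$ on $L^2([a,b]^N)$ with Dirichlet boundary on $\partial[a,b]^N$. Since $U$ is continuous and bounded on $[a,b]^N$, the standard Feynman--Kac formula yields a continuous kernel
\[
  K(\vect{x};\vect{y}) = \mu_{x_1 y_1} \times \cdots \times \mu_{x_N y_N}\!\left( e^{-\int_0^\beta U(\vect{\gamma}(t))\,\dd t}\, \mathbf{1}\{\gamma_j(t)\in(a,b)\ \forall j,t\}\right).
\]
Because $U$ is permutation-symmetric, $\hat H_N$ leaves $\Hi_N$ invariant and restricts there to $H_N$ (by the uniqueness of self-adjoint extension noted after the definition of $H_N$). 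Writing $P_-=\frac{1}{N!}\sum_\sigma \sgn(\sigma) U_\sigma$ for the fermionic projection gives
\[
  Z_N(\beta) = \Tr_{L^2([a,b]^N)}\!\bigl(e^{-\beta\hat H_N} P_-\bigr) = \frac{1}{N!}\sum_{\sigma\in S_N} \sgn(\sigma) \int_{[a,b]^N} K(\vect{x};\vect{x}_\sigma)\,\dd\vect{x},
\]
and, imposing~\eqref{eq:corr-normalization}, $\rho_N^N(\vect{x};\vect{y}) = (N!/Z_N)\,K^-(\vect{x};\vect{y})$ with $K^-(\vect{x};\vect{y}):=\frac{1}{N!}\sum_\sigma \sgn(\sigma) K(\vect{x};\vect{y}_\sigma)$.

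The Karlin--McGregor reflection collapses these signed sums. For $\vect{x},\vect{y}\in W_N(a,b)$ and any path tuple with $\gamma_j(0)=x_j$, $\gamma_j(\beta)=y_{\sigma(j)}$, $\sigma\neq\mathrm{id}$, some pair of paths must meet. Pick the lexicographically smallest meeting pair $(i,j)$, let $\tau$ be their first meeting time, and swap their tails on $[\tau,\beta]$. This is a measure-preserving involution on the ``crossing'' event which preserves the Boltzmann weight (because $U$ is permutation-symmetric) and the outer confinement indicator, but flips $\sgn(\sigma)$; so all non-identity contributions cancel and
\[
  \sum_{\sigma}\sgn(\sigma)K(\vect{x};\vect{y}_\sigma) = \mu_{x_1 y_1}\times\cdots\times\mu_{x_N y_N}\!\left( e^{-\int_0^\beta U\,\dd t}\,\mathbf{1}_{W_N^\beta(a,b)}(\vect{\gamma})\right)
\]
for $\vect{x},\vect{y}\in W_N(a,b)$. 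Since $F(\vect{x}):=\sum_\sigma\sgn(\sigma)K(\vect{x};\vect{x}_\sigma)$ is permutation-invariant (the joint symmetry of $K$ conjugates $\sigma$ but preserves $\sgn$), $\frac{1}{N!}\int_{[a,b]^N}F=\int_{W_N(a,b)}F$, yielding the stated formula for $Z_N(\beta)$; the density-matrix statement then follows immediately from the identity for $K^-$ restricted to $W_N(a,b)\times W_N(a,b)$, divided by $Z_N$.

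The one genuine obstacle is the reflection step: measurability of the first-meeting-pair stopping time and of the tail swap, and verification that the swap is a measure-preserving involution, is classical for independent 1D continuous-path processes (Karlin--McGregor; Lindstr\"om--Gessel--Viennot), but it needs to be applied to the potential-weighted bridge measure rather than just to the transition kernel. The check is immediate: $U$ is permutation-symmetric, so swapping tails does not alter $\int_0^\beta U(\vect{\gamma}(t))\,\dd t$, and the outer confinement $\gamma_j\in(a,b)$ is imposed coordinate-wise and is likewise untouched.
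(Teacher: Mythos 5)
Your proof is correct, but it takes a genuinely different route from the one the paper sketches. The paper's argument goes directly to the Weyl chamber: antisymmetry forces $\psi$ to vanish on the diagonals $x_i=x_j$, so $H_N$ is unitarily equivalent to the Dirichlet operator on $W_N(a,b)$, whose semigroup is generated by Brownian motion killed on $\partial W_N(a,b)$; the non-crossing indicator $\mathbf{1}_{W_N^\beta(a,b)}$ then appears immediately as the survival event of the killed process, and the Feynman--Kac formula is invoked once, in the chamber. You instead stay on the cube $[a,b]^N$, apply scalar Feynman--Kac there, insert the fermionic projection $P_-$, and collapse the signed sum over permutations by the Karlin--McGregor tail-swap involution, checking that the permutation symmetry of $U$ makes the Boltzmann weight and the confinement indicator invariant under the swap. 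This is precisely the ``interacting Karlin--McGregor'' identity that the paper relegates to a footnote (the $U\equiv 0$ case), proved directly rather than absorbed into the definition of the Dirichlet generator on the chamber; what it buys is an explicit, elementary cancellation argument at the price of handling the strong-Markov measurability of the first-meeting time, which the killed-process formulation avoids. Two small points of precision: the involution cancels all \emph{crossing} configurations, including those with $\sigma=\mathrm{id}$ (your phrase ``all non-identity contributions cancel'' understates this --- the surviving non-crossing configurations happen to occur only for $\sigma=\mathrm{id}$ when $\vect{x},\vect{y}\in W_N(a,b)$); and the identification of $e^{-\beta\hat H_N}\restriction\Hi_N$ with $e^{-\beta H_N}$ deserves the one-line remark that $P_-C_0^\infty((a,b)^N)$ is a common core. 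Neither is a gap.
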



The full proof is omitted as the lemma is a standard consequence of Fermi statistics and the usual Feynman-Kac formula \cite[Sec. 6]{simon1979functional}. However, let us briefly recall the general argument. First the antisymmetry is used to go from $\R^N$ to the Weyl chamber\footnote{For free fermions ($U \equiv 0$), Lemma~\ref{lem:feynmankac} reduces to the Karlin-McGregor determinantal formulas \cite{karlin1959coincidence} for non-coincidence probabilities of Brownian motions.}. The relevant boundary conditions are Dirichlet boundary conditions: indeed,  $\psi(x_2,x_1) = - \psi(x_1,x_2)$ yields $\psi(x_1,x_2) = 0$ whenever $x_1 = x_2$.
The Laplacian with Dirichlet boundary conditions  is the infinitesimal generator of a sub-Markov process, namely Brownian motion killed at the boundary of the Weyl chamber. The Feynman-Kac formula then gives a representation of the integral kernel of the Hamiltonian in the Weyl chamber and the lemma follows.

Note that in our notation, we have employed two equivalent ways to view the above result.  One is to think of $\vect{\gamma}(t)$ as a single $N$-dimensional Brownian bridge inside the Weyl chamber;
the other is to think of the components $\gamma_i(t)$ as non-colliding one-dimensional Brownian bridges as shown in the Figure 1.


\begin{figure}[h]
    \begin{center}
        \includegraphics[scale=.5] {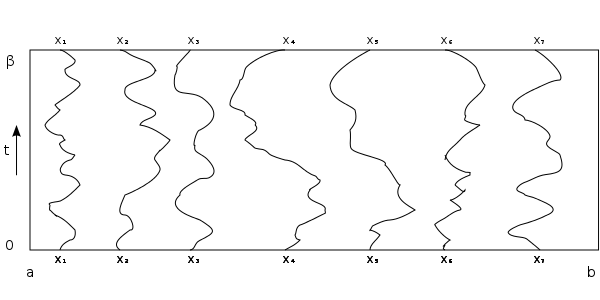}
        \begin{caption} {\footnotesize \label{fig:bridges}
            Non-colliding Brownian bridges:  a typical path configuration in $W_N^\beta(a,b)$, contributing to the partition function $Z_N(\beta)$.}
        \end{caption}
    \end{center}
\end{figure}

Using \eqref{eq:density} and \eqref{eq:potential2}, the Feynman-Kac formula gives us an alternative way of viewing the measure $\nab$:
        \bea \label{feynman}
    \nab(\dd \vect{x})=\frac{1}{Z_N(\beta)}\,
            \mu_{x_1 x_1}\times \cdots \times \mu_{x_N x_N}
            \Bigl(
     e^{- \int_0^\beta \sum_{j}(\gamma_j(t)-m_j)^2 \dd t} \mathbf{1}_{W_N^\beta(a,b)} (\vect{\gamma}) \Bigr)
    \eea
where by abuse of notation we write an equality between the measure $\nab$ and its density with respect to Lebesgue measure.
Our next lemma recasts \eqref{feynman} in a way that will prepare us to employ a Ruelle transfer operator. Let $\nu_{xy}$ be the measure on $E$ that is absolutely continuous with respect to $\muxy$ with Radon-Nikod{\'y}m derivative
\be \label{eq:rd_deriv}
    \frac{\dd \nuxy}{\dd \muxy}(\gamma) = \frac{1}{c(\beta,\rho)} \exp\Bigl( - \rho\int_0^\beta \gamma(t)^2 \dd t\Bigr)
\ee
with $c(\beta,\rho)$ chosen so that  the mixture
\be \label{eq:numeas}
     \nu := \int_\R \nu_{xx} \dd x
\ee
is a probability measure (in fact it is Gaussian since $\muxx$ is Gaussian and the exponent is quadratic; see Proposition \ref{prop:gaussian}  where we also compute $c(\beta,\rho)$ explicitly).
Define $\nu_j$ similarly but with $(\gamma(t)-m_j)^2$ replacing  $\gamma(t)^2$ in \eqref{eq:rd_deriv}.  Set
\be \label{def:nuhat}
    \hat\nu_{a,b}:= \nu_1\times \nu_2 \times \cdots\times \nu_N
\ee
which is by construction, a probability measure for $N$ \emph{independent} Gaussian paths.
\begin{lemma}\label{prop:nab}
    We have
    \be \label{eq:partfunction}
        Z_N(\beta) =c(\beta,\rho)^N e^{-\beta N /(12 \rho)}\,  \hat\nu_{a,b}\Bigl( W_N^\beta(a,b)\Bigr).
    \ee
    Moreover, if $ \nu_{a,b}(\dd\vect{\gamma})$ is defined as the measure $ \hat\nu_{a,b}(\dd\vect{\gamma})$ conditioned on the event $\vect{\gamma} \in W_N^\beta(a,b)$, i.e.,
    \be \nn
        \frac{\dd  \nu_{a,b}}{\dd \hat\nu_{a,b}} (\gamma_1,\ldots,\gamma_N)
            := \frac{1}{\hat\nu_{a,b}(W_N^\beta(a,b))}\, \mathbf{1}_{W_N^\beta(a,b)} (\gamma_1,\ldots,\gamma_N),
    \ee
    then the law of $(\gamma_1(0),\ldots,\gamma_N(0))$ under $\nu_{a,b}(\dd\vect{\gamma})$ has a density with respect to Lebesgue measure on $\R^N$ given by $\nab(\dd\vect{x})$ as defined in \eqref{def:nab}.
\end{lemma}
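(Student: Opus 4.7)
The plan is to start from the Feynman--Kac formula of Lemma~\ref{lem:feynmankac}, apply the identity \eqref{eq:potential2} to replace $U$ by a harmonic potential centred at the $m_j$, and recognise the resulting expression in terms of the Gaussian-reweighted bridge measures $\nu_j$.

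For part (i), on the event $W_N^\beta(a,b)$ the paths are strictly ordered $\gamma_1(t)<\cdots<\gamma_N(t)$ for every $t\in[0,\beta]$, so \eqref{eq:potential2} applies pointwise in $t$ and yields
\begin{equation*}
    \int_0^\beta U(\gamma_1(t),\ldots,\gamma_N(t))\,\dd t = \rho\sum_{j=1}^N \int_0^\beta \bigl(\gamma_j(t)-m_j\bigr)^2\,\dd t + \frac{\beta N}{12\rho}.
\end{equation*}
After pulling $e^{-\beta N/(12\rho)}$ out of the Feynman--Kac integrand, the exponential factors as a product over $j$, and the outer integral $\dd x_1\cdots \dd x_N$ can be extended from $W_N(a,b)$ to $\R^N$ because $\mathbf{1}_{W_N^\beta(a,b)}$ already forces $x_1<\cdots<x_N$ via $\gamma_j(0)=x_j$. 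The definition \eqref{eq:rd_deriv}--\eqref{eq:numeas} of $\nu_j$ then gives, for every bounded measurable $F$ on $E$, the key identity
\begin{equation*}
    \int_\R \mu_{xx}\Bigl(F(\gamma)\,e^{-\rho \int_0^\beta (\gamma(t)-m_j)^2\,\dd t}\Bigr)\,\dd x = c(\beta,\rho)\,\nu_j(F).
\end{equation*}
Applying this factor by factor (with Fubini) to $F=\mathbf{1}_{W_N^\beta(a,b)}$ produces the factor $c(\beta,\rho)^N\,\hat\nu_{a,b}(W_N^\beta(a,b))$, which is precisely \eqref{eq:partfunction}.

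For part (ii) I would run exactly the same substitution on the Feynman--Kac representation of $\rho_N^N(\vect{x};\vect{x})$ for $\vect{x}\in W_N(a,b)$, but \emph{without} integrating over the starting points. This gives
\begin{equation*}
    \rho_N^N(\vect{x};\vect{x}) = \frac{e^{-\beta N/(12\rho)}\,c(\beta,\rho)^N}{Z_N(\beta)}\, \bigotimes_{j=1}^N \nu_{j,x_j,x_j}\bigl(\mathbf{1}_{W_N^\beta(a,b)}\bigr).
\end{equation*}
Using part (i) to collapse the prefactor to $1/\hat\nu_{a,b}(W_N^\beta(a,b))$, the right-hand side is exactly the Lebesgue density at $\vect{x}$ of the joint law of $(\gamma_1(0),\ldots,\gamma_N(0))$ under $\hat\nu_{a,b}$ conditioned on $W_N^\beta(a,b)$, obtained by disintegrating each $\nu_j = \int_\R \nu_{j,x,x}\,\dd x$ with respect to the starting point.

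No step looks genuinely hard: once \eqref{eq:potential2} is inserted the whole argument is algebraic bookkeeping. The only subtlety to flag is the disintegration step in part (ii): one must check that, for the product $\nu_1\otimes\cdots\otimes\nu_N$, the joint density of the starting points $\gamma_j(0)$ is $\prod_j \nu_{j,x_j,x_j}(E)$ and that integration against the conditional law of the paths given $\vect{\gamma}(0)=\vect{x}$ corresponds exactly to the $\bigotimes_j \nu_{j,x_j,x_j}$ evaluation above. This is immediate from $\nu_j = \int_\R \nu_{j,x,x}\,\dd x$ together with Fubini, but it is worth spelling out to keep the notation unambiguous.
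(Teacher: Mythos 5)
Your proof is correct and follows the same route as the paper, which simply cites Lemma~\ref{lem:feynmankac}, the identity \eqref{eq:potential2} (valid pointwise in $t$ on $W_N^\beta(a,b)$ since the paths stay ordered), and the definitions \eqref{eq:rd_deriv}--\eqref{def:nuhat}, together with \eqref{feynman} for the statement about $\nu_{a,b}^0$. You merely spell out the bookkeeping (the factorisation of the exponential, the normalisation by $c(\beta,\rho)^N$, and the disintegration of each $\nu_j$ over its starting point) that the paper leaves implicit.
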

\begin{proof}
    Eq. \eqref{eq:partfunction} follows from \eqref{eq:potential2}, Lemma \ref{lem:feynmankac}, and the definition of $\hat\nu_{a,b}$ given by \eqref{eq:rd_deriv}-\eqref{def:nuhat}.
    The statement concerning $\nab$ follows from \eqref{feynman}.
\end{proof}


The Feynman-Kac formulation also gives us a nice representation for the reduced density matrices, which is a variant of some well-known functional integral representations (see \cite{ginibre1965reduced} or \cite[Ch. 6.3.3]{bratteli1981operator}). We give an expression and proof only for the one-particle matrix and content ourselves with a geometric description for the $n$-particle matrices.

Fix $x,y\in [a,b]$ with $x \leq y$ and fix $j,k\in\{1,\ldots,N\}$ such that $j\le k$. Let $\Gamma_{xjky}\subset W_N^\beta(a,b)$ (see Figure~\ref{fig:bridges2}) be the set of non-intersecting paths $(\gamma_1,\ldots,\gamma_N)$ such that
\begin{itemize}
    \item $\gamma_i(0) = \gamma_i(\beta)$ for all $i < j$ and $i>k$,
\item $\gamma_i(\beta) = \gamma_{i+1}(0)$ for $i=j,\ldots,k-1$  (if $j=k$ this condition is vacuous),
    \item $\gamma_j(0) = x$ and $\gamma_k(\beta) = y$.
\end{itemize}
Also, we label the starting point of the $i$th path as $x_i$ so that in particular $x=x_j$.

\begin{figure}[h]
    \begin{center}
        \includegraphics[scale=.5] {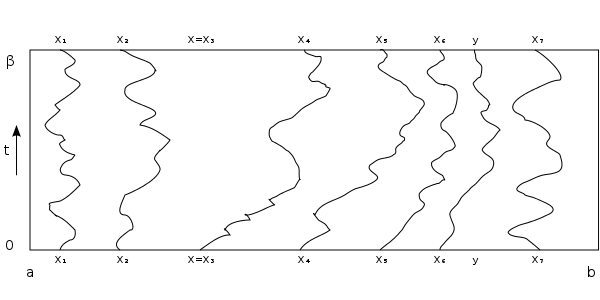}
        \begin{caption} {\footnotesize \label{fig:bridges2}
            A path configuration in the set $\Gamma_{xjky}$ with $j=3$ and $k= 6$ contributing to the one-particle reduced density matrix $\rho_1^N(x;y)$.
            }
        \end{caption}
    \end{center}
\end{figure}

For $x=x_j$ and $\vect{x} = (x_1,\ldots,x_N)\in W_N(a,b)$, let
\bea
\nn &&W_{xjky}:=\{(\vect{x},y): a<x_1<\cdots<x_{k}<y<x_{k+1}<\cdots<x_{N}<b,\ x_j=x\}.
\eea
By integrating over $W_{xjky}$, on $\Gamma_{xjky}$ we define the measure
\bea  \label{eq:loop-meas}
    &&\nu_{xjky}(\dd\vect{\gamma}) :=\\
    &&\nn\int_{W_{xjky}}  \mathbf{1}_{W_N^\beta(a,b)}\nu_{x_1 x_1}(\dd \gamma_1)\times \cdots \Bigl( \nu_{xx_{j+1}}\times \cdots \times \nu_{x_ky} \Bigr) \cdots \times \nu_{x_{N} x_{N}}(\dd\gamma_N) \,\dd x_1 \cdots \widehat{\dd x_{j}} \cdots \dd x_{N}
\eea
where  $\widehat{\dd x_{j}}$ signifies that integration over this variable is omitted.
The above is a mixture of bridge measures obtained by integrating out the free starting and ending points.

Because of the self-adjointness of $\exp(- \beta H_N)$, the reduced density matrices are symmetric with respect to the starting and ending points $(\vect{x},\vect{y})$ (this is different from \eqref{eq:corr-antisymmetry}); in particular,  $\rho_1^N(x;y) = \rho_1^N(y;x)$ and we need only treat the case $x\leq y$.

\begin{lemma}\label{lem:reduced-matrix}
    Let $x,y\in [a,b]$ with $x\leq y$.  The reduced one-particle matrix is given by
    \be  \label{eq:reduced-matrix}
        \rho_1^N(x;y) = \frac{1}{\hat \nu_{a,b}(W_N^\beta(a,b))}
            \sum_{j\le k} (-1)^{k-j} \nu_{xjky} \bigl( W_N^\beta(a,b) \bigr).
    \ee
\end{lemma}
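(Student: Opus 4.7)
The plan is to start from
\begin{equation*}
  \rho_1^N(x;y) = \frac{N}{Z_N(\beta)} \int_{[a,b]^{N-1}} e^{-\beta H_N}(x,\vect{x'};y,\vect{x'})\,\dd\vect{x'},
\end{equation*}
where the proportionality constant in the definition of $\rho_1^N$ is pinned down by $\int\rho_1^N(x;x)\,\dd x=N$ together with $\Tr e^{-\beta H_N}=Z_N(\beta)$. Joint antisymmetry of $e^{-\beta H_N}$ renders the integrand invariant under simultaneous permutations of the last $N-1$ coordinates in both slots, so one may restrict to sorted $\vect{x'}\in W_{N-1}(a,b)$ while picking up a factor of $(N-1)!$. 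For each sorted $\vect{x'}$, let $j\in\{1,\ldots,N\}$ and $k\in\{1,\ldots,N\}$ denote the ranks of $x$ in $(x,\vect{x'})$ and of $y$ in $(y,\vect{x'})$; since $x\le y$ this forces $j\le k$. The two sortings are cyclic permutations of lengths $j$ and $k$, of signs $(-1)^{j-1}$ and $(-1)^{k-1}$, so antisymmetry gives
\begin{equation*}
  e^{-\beta H_N}(x,\vect{x'};y,\vect{x'}) = (-1)^{k-j}\,e^{-\beta H_N}(\vect{\hat x};\vect{\hat y}),
\end{equation*}
where $\vect{\hat x},\vect{\hat y}\in W_N(a,b)$ are obtained by inserting $x$ in slot $j$ and $y$ in slot $k$ of the sorted $\vect{x'}$.

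On the Weyl chamber I apply Lemma~\ref{lem:feynmankac}; matching its normalization against \eqref{eq:corr-normalization} and \eqref{eq:density} forces
\begin{equation*}
  e^{-\beta H_N}(\vect{\hat x};\vect{\hat y}) = \frac{1}{N!}\bigl(\mu_{\hat x_1\hat y_1}\times\cdots\times\mu_{\hat x_N\hat y_N}\bigr)\bigl(e^{-\int_0^\beta U\,\dd t}\,\mathbf{1}_{W_N^\beta(a,b)}\bigr).
\end{equation*}
Substituting the Baxter identity \eqref{eq:potential2} (valid because the paths stay inside the Weyl chamber for all $t$) and applying the Radon--Nikod\'ym derivative \eqref{eq:rd_deriv} turns each $\mu_{\hat x_i\hat y_i}$ into $c(\beta,\rho)\,(\nu_i)_{\hat x_i\hat y_i}$, producing the overall scalar $e^{-\beta N/(12\rho)}c(\beta,\rho)^N$. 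In the sorted region the pairings $(\hat x_i,\hat y_i)$ arrange themselves as: closed loops with $\hat x_i=\hat y_i$ for $i<j$ and $i>k$, together with a chain $\gamma_j,\ldots,\gamma_k$ satisfying $\gamma_j(0)=x$, $\gamma_k(\beta)=y$ and $\gamma_i(\beta)=\gamma_{i+1}(0)$ for $j\le i<k$ --- the geometric picture of Figure~\ref{fig:bridges2}. Matching this structure against \eqref{eq:loop-meas} identifies the integral over the sorted region as $\nu_{xjky}(W_N^\beta(a,b))$, reading the $\nu_{xy}$'s in \eqref{eq:loop-meas} as the bridges $(\nu_i)_{xy}$ of the OU process centered at $m_i$ for the $i$-th path.

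All prefactors collapse as
\begin{equation*}
  \frac{N}{Z_N(\beta)}\cdot\frac{(N-1)!}{N!}\cdot e^{-\beta N/(12\rho)}c(\beta,\rho)^N = \frac{e^{-\beta N/(12\rho)}c(\beta,\rho)^N}{Z_N(\beta)} = \frac{1}{\hat\nu_{a,b}(W_N^\beta(a,b))},
\end{equation*}
the last equality being \eqref{eq:partfunction}; this delivers the claimed formula \eqref{eq:reduced-matrix}. The main obstacle is the sign bookkeeping in the first step: one must verify that the two independent sortings combine to exactly $(-1)^{k-j}$ with no hidden contribution from reordering the $\vect{x'}$ slots themselves. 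This is the only place where Fermi statistics enters non-trivially; the remaining manipulations are a careful rearrangement of the Feynman--Kac representation already established in Lemmas~\ref{lem:feynmankac} and~\ref{prop:nab}.
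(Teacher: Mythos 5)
Your argument is correct and follows essentially the same route as the paper: decompose the integral over the spectator variables according to where $x$ and $y$ fall among them, use the separate antisymmetry of the kernel to sort into the Weyl chamber (producing the sign $(-1)^{k-j}$ from the two insertion cycles), and identify each sector with $\nu_{xjky}$ via the Feynman--Kac representation. The only difference is presentational: the paper works out $N=2$ explicitly and fixes the constant a posteriori by normalization, whereas you carry out the general-$N$ combinatorics and track the prefactor through \eqref{eq:partfunction} directly, which in fact makes the sign bookkeeping you flag at the end already fully justified by your second paragraph.
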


\begin{remark}
 When $x=y$, we must have $j=k$. Therefore  the one-particle density $\rho_1(x;x)$ has a much simpler expression as a sum of one-dimensional marginals of bridge starting  points $\gamma_j(0)$, $j=1,\ldots,N$, under $\hat \nu_{a,b}$. In particular, Eq.~\eqref{eq:reduced-matrix} is compatible with $\int_a^b \rho_1^N(x;x) \dd x = N$.
\end{remark}

\begin{proof}
    Consider first the case $N=2$.
     Fix $x,y\in \R$ and assume $x\leq y$.
     The one-particle matrix is by definition proportional to
    \be
        \rho_1^2(x;y) \propto   \int_a^b \rho_2^2 (x, x'; y,x') \dd x'.
    \ee
    We use the antisymmetry~\eqref{eq:corr-antisymmetry} to reorder the arguments of $\rho_2^2$ in the integrand and obtain
    \be \label{eq:sum-3}
         \int_a^x \rho_2^2(x',x;x',y) \dd x' -  \int_x^y \rho_2^2(x,x';x',y) \dd x'
            + \int_y^b \rho_2^2(x,x';y,x') \dd x.
    \ee
    We may now apply  Lemma~\ref{lem:feynmankac}: the first term corresponds to paths $(\gamma_1,\gamma_2)$ with $\gamma_1(0) = \gamma_1(\beta) = x'$, $\gamma_2(0) = x$, and $\gamma_2(\beta) = y$. The third term is similar, except for a switch in the roles of $\gamma_1$ and $\gamma_2$. The middle term corresponds to paths with $\gamma_1(0) = x$, $\gamma_1(\beta) = x' = \gamma_2(0)$, and $\gamma_2(\beta) = y$. Thus we find
    \be
        \rho_1^2(x;y) = \frac{1}{C}  \Bigl( \nu_{x11y}  - \nu_{x12y} + \nu_{x22y} \Bigr)\bigl( W_N(a,b)\bigr)
    \ee
    for some constant $C>0$. The proof of Eq.~\eqref{eq:reduced-matrix} is concluded by
    computing $C>0$ via the condition $\int_a^b \rho_1^2(x;x) \dd x = 2$.

    The computation for general $N\in \N$ is similar; the sign $(-1)^{k-j}$ comes from reordering the arguments in the sector where there are $k-j$ variables $x'_i$ between $x$ and $y$.
\end{proof}

Lemma~\ref{lem:reduced-matrix} has analogues for the $n$-particle matrices, which we now briefly describe. In Figure 3 below, we show a common geometric picture, c.f. \cite[Ch. 6.3.3]{bratteli1981operator}: we view $[a,b]\times [0,\beta]$ as a cylinder, with $\beta$ the periodic (angular) coordinate, and think of paths $\gamma\in E$ as loops of winding number $n=1$. If  $\gamma(0)= \gamma(\beta)$, the loop is \emph{closed}, otherwise it is \emph{open}. More generally, a loop of winding number $w\geq 2$
is a vector $(\gamma_1,\ldots,\gamma_w)$ such that $\gamma_j(\beta) = \gamma_{j+1}(0)$ for all $j=1,\ldots,w-1$. It is closed if $\gamma_w(\beta) = \gamma_1(0)$.
Note that we may also write the loop as a single ``composite'' path $\gamma:[0,n\beta] \to [a,b]$, where
$\gamma(j\beta +t) =\gamma_j(t)$ for every $j=1,\ldots,N-1$ and $t\in [0,\beta]$. Loops are not allowed to have self-intersections (i.e., $\gamma_1,\ldots,\gamma_w$ do not collide), but they can wind from right to left: we may have $\gamma_w(\beta) < \gamma_1(0)$.

\begin{figure}[h]\label{fig:bridges3}
    \begin{center}
        \includegraphics[scale=.5] {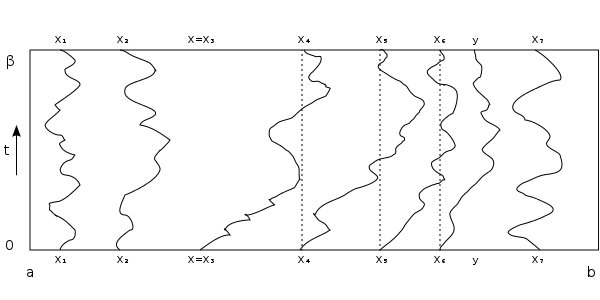}
        \begin{caption} {\footnotesize
        An ``open loop'' from $x$ and $y$ with winding number 4. The endpoints of dotted lines are identified.}
        \end{caption}
    \end{center}
\end{figure}

Fix $x_1<\cdots<x_n$, $y_1,\ldots,y_n$ (the $y_k$'s need not be ordered).  The paths $(\gamma_1,\ldots,\gamma_N)$ contributing to the representation of the $n$-particle matrix consist of
\begin{itemize}
    \item  $n$ open loops of respective winding numbers $w_1,\ldots,w_n\geq 1$. The $k$-th open loop starts at $x_k$ and ends at $y_k$.
    \item $N- \sum_k w_k$ closed loops of winding number $1$.
    \item The loops do not collide and have no self-intersections.
\end{itemize}
Open loops can be entwined -- we may have, for example, an open loop of winding number $2$ going from $x_1$ to $x'$ and then $y_1$, and another open loop going from $x_2\in (x_1,x')$ to $y_2\in (x', y_1)$. In this example, $x_1<x_2$ but $y_2<y_1$.

The representation of the $n$-particle reduced density matrix is in terms of the natural analogs of Eqs.~\eqref{eq:loop-meas} and~\eqref{eq:reduced-matrix}. Each open loop configuration comes with the sign
$
    \prod_{i=1}^n (-1)^{w_i - 1}.
$

\subsection{ Krein-Rutman theorem}

As mentioned in the introduction, following \cite{kunz1974one} and \cite{aizenman1980structure}, the Krein-Rutman theorem \cite{krein1948linear} is one of our main tools for proving symmetry breaking.  It is a generalization of the Perron-Frobenius theorem.

Before stating a version of the Krein-Rutman theorem, let us first describe the
 Ruelle  operator to which it will be applied.   The Ruelle operator will be used to reproduce the probability measure \eqref{feynman} on point configurations $\vect{x}$ in $[a,b]$. By Lemma \ref{prop:nab}, it is in fact enough to produce the measure $\nu_{a,b}(\dd \vect{\gamma})$, which we now do.
Our kernel $K:E\times E \to \R$ operates on $L^2(E)$. 
It is defined  by
\be
    K(\gamma,\eta) := \begin{cases}
                            1,& \quad \forall t\in [0,\beta]:\, \gamma(t) < \eta(t)+\lambda ,\\
                            0,&\quad \text{else}.
                            \end{cases}
\ee
We let $\K$ be the associated integral operator on the separable Hilbert space $L^2(E,\nu)$ with $\nu$ defined as in \eqref{eq:numeas},
\be \label{eq:ruelle operator}
    (\K f)(\gamma) := \int_{E} K(\gamma,\eta) f(\eta) \nu(\dd \eta).
\ee
Its adjoint is
\be \label{eq:adjoint}
    (\K^* f)(\gamma) := \int_{E} K(\eta,\gamma) f(\eta) \nu(\dd \eta).
\ee
Write  $\la f,g\ra:= \int_{E} \overline{f} g \,\dd \nu$ for the scalar product in $L^2(E,\nu)$ and $||f||:= \sqrt{\la f,f\ra}$ for the $L^2$-norm.

\begin{lemma}\label{lem:FG}
    Let
    \bea\nn
        F(\gamma)&:=&K(- 1/(2\rho),\gamma)=\mathbf{1}_{\{\forall t\in [0,\beta]:\, \gamma(t) >- 1/(2\rho)\}}\\
        \nn G(\gamma)&:=&K(\,\gamma\,, 1/(2\rho)=\mathbf{1}_{\{\forall t\in [0,\beta]:\, \gamma(t) < 1/(2\rho) \}}.
    \eea
    We have
    \be\nn
        \hat\nu_{a,b}\bigl( W_N^\beta(a,b) \bigr) = \la F, \K^{N-1} G\ra .
    \ee
\end{lemma}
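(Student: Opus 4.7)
The plan is to reduce the multi-path integral over the Weyl chamber to an iterated single-variable integral by shifting each path so that all of them become draws from the common centered Gaussian measure $\nu$, and then recognizing the resulting iterated integral as an inner product with kernel $\mathcal{K}$.

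First I would perform the change of variables $\tilde\gamma_j(t) := \gamma_j(t) - m_j$ for $j=1,\ldots,N$. Since the Radon--Nikod\'ym derivative defining $\nu_j$ involves the quadratic weight $\exp(-\rho\int_0^\beta (\gamma(t)-m_j)^2\dd t)$ and the underlying measure $\mu_{xx}$ integrated against Lebesgue in $x$ is translation invariant (a centered Brownian bridge shifted by a constant is the bridge at the shifted endpoints, and we integrate the endpoint freely over $\R$), the pushforward of $\nu_j$ under this shift is exactly $\nu$. Consequently the pushforward of $\hat\nu_{a,b} = \nu_1\times\cdots\times\nu_N$ under the diagonal shift is the product measure $\nu^{\otimes N}$ on $E^N$.

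Next I would rewrite the indicator of $W_N^\beta(a,b)$ in the shifted variables. Using $m_{j+1}-m_j = \lambda$, $m_1-a=\lambda/2=1/(2\rho)$, and $b-m_N=\lambda/2=1/(2\rho)$, the constraints $a<\gamma_1(t)<\cdots<\gamma_N(t)<b$ for all $t\in[0,\beta]$ translate into
\begin{align*}
\tilde\gamma_1(t) &> -\tfrac{1}{2\rho}, \\
\tilde\gamma_j(t) &< \tilde\gamma_{j+1}(t)+\lambda \qquad (j=1,\ldots,N-1), \\
\tilde\gamma_N(t) &< \tfrac{1}{2\rho},
\end{align*}
valid for all $t\in[0,\beta]$. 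By the very definition of the kernel, the middle conditions are exactly $K(\tilde\gamma_j,\tilde\gamma_{j+1})=1$, while the first and last are $F(\tilde\gamma_1)=1$ and $G(\tilde\gamma_N)=1$ respectively. Therefore
\[
\mathbf{1}_{W_N^\beta(a,b)}(\gamma_1,\ldots,\gamma_N) = F(\tilde\gamma_1)\,\prod_{j=1}^{N-1} K(\tilde\gamma_j,\tilde\gamma_{j+1})\,G(\tilde\gamma_N).
\]

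Finally I would integrate and recognize the result. Combining the two steps,
\[
\hat\nu_{a,b}\bigl(W_N^\beta(a,b)\bigr)=\int_{E^N} F(\tilde\gamma_1)\,\prod_{j=1}^{N-1} K(\tilde\gamma_j,\tilde\gamma_{j+1})\,G(\tilde\gamma_N)\,\nu(\dd\tilde\gamma_1)\cdots\nu(\dd\tilde\gamma_N),
\]
and by Fubini, integrating from right to left, the innermost integral $\int K(\tilde\gamma_{N-1},\tilde\gamma_N)G(\tilde\gamma_N)\nu(\dd\tilde\gamma_N)$ equals $(\mathcal{K}G)(\tilde\gamma_{N-1})$; iterating yields $(\mathcal{K}^{N-1}G)(\tilde\gamma_1)$, and the outer integral against $F(\tilde\gamma_1)\nu(\dd\tilde\gamma_1)$ is $\langle F,\mathcal{K}^{N-1}G\rangle$ (the inner product is real since $F$ is real-valued).

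There is no real obstacle here beyond bookkeeping; the only step requiring a little care is verifying that after the shift $\tilde\gamma_j=\gamma_j-m_j$ each $\nu_j$ becomes $\nu$, which follows from translation invariance of the Lebesgue-integrated Brownian bridge measure $\mu=\int_\R \mu_{xx}\dd x$ together with the quadratic form of the Gibbs weight in~\eqref{eq:rd_deriv}.
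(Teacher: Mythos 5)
Your proof is correct and follows essentially the same route as the paper's: shift each path by $m_j$, observe that the Weyl-chamber constraints become the kernel conditions $F(\tilde\gamma_1)\prod_j K(\tilde\gamma_j,\tilde\gamma_{j+1})G(\tilde\gamma_N)$, and that each shifted $\nu_j$ becomes $\nu$, then integrate iteratively. You spell out the translation-invariance step that the paper leaves implicit, but the argument is identical in substance.
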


\begin{proof}
Recall the definition of $m_j$ from \eqref{eq:xj}.  Note that $$a<\gamma_1(t)<\cdots <\gamma_N(t)<b$$ if and only if the shifted paths $\tilde \gamma_j(t):= \gamma_j(t) - m_j$ satisfy

\be\tilde \gamma_1(t)> a - m_1= - 1/(2\rho)\ \text{ and }\ \tilde \gamma_N(t) < b-m_N = 1/(2\rho),\ee
and for all $j \in \{1,\ldots,N-1\}$,
    \be \label{eq:tilde gamma}\nn
    \tilde \gamma_j(t) < \tilde \gamma_{j+1}(t) + m_{j+1} - m_j = \tilde \gamma_{j+1}(t) + \lambda .
    \ee
    From Lemma \ref{prop:nab}  and \eqref{eq:numeas} we have that
    \bea\label{eq:FG}
        &&\hat\nu_{a,b}\Bigl( W_N^\beta(a,b) \Bigr) \\
        \nn&=& \int_{a<\gamma_1<\cdots<\gamma_N<b}\nu_1(\dd\gamma_1) \cdots \nu_N(\dd\gamma_N) \\
            \nn &=&\int_{E^N}  K(- 1/(2\rho),\tilde\gamma_1) K(\tilde \gamma_1,\tilde \gamma_2) \cdots K(\tilde \gamma_{N-1},\tilde \gamma_N) K(\tilde \gamma_N,1/(2\rho)) \nu(\dd \tilde \gamma_1) \cdots \nu(\dd \tilde \gamma_N) \\
        \nn&=&\int_{E^N}  F(\tilde \gamma_1) K(\tilde \gamma_1,\tilde \gamma_2) \cdots K(\tilde \gamma_{N-1},\tilde \gamma_N) G(\tilde \gamma_N) \nu(\dd \tilde \gamma_1) \cdots \nu(\dd \tilde \gamma_N) \\
        \nn&=& \la F, \K^{N-1} G \ra.
    \eea
\end{proof}


We are now ready for our version of the Krein-Rutman theorem~\cite{krein1948linear}. It follows from the standard theorem by simple arguments adapted from~\cite[Appendix A]{kunz1974one}. In particular, the Cauchy-Schwarz inequality allows us to transfer properties from the Hilbert space $L^2(E,\nu)$ to the space of bounded functions. Our statements are uniform in $\gamma$ and do not involve $\nu$-null sets. This is important because the representation of reduced density matrices uses paths from $x$ to $y\neq x$, which form a $\nu$-null set.


\begin{lemma}[Krein-Rutman]\label{lem:krein}
    Let $z_0:=||\K||>0$. Then:
    \begin{enumerate}
    \item[(a)]  There is a unique strictly positive function $\Psi_0: E\to \R$ such that  $(\K \Psi_0)(\gamma) = z_0 \Psi_0(\gamma)$ for all $\gamma \in E$ and
     $\int_E \Psi_0(\gamma)  \Psi_0(- \gamma) \nu(\dd \gamma) =1$.
    \item[(b)] The reflected function $\tilde \Psi_0 (\gamma) := \Psi_0(-\gamma)$ satisfies $\K^* \tilde \Psi_0 =z_0 \tilde \Psi_0$.
    \item[(c)] $\Psi_0$ is bounded.
    \item[(d)]  For suitable $\eps,C>0$, all $f\in L^2(E,\nu)$ and all $n\in \N$,
    \be
    \begin{aligned}
         \Bigl| \frac{1}{z_0^n} \K^n f(\gamma) - \la\tilde \Psi_0, f\ra  \Psi_0 (\gamma)\Bigr| &\leq C e^{-\eps n} ||f||,  \\
           \Bigl| \frac{1}{z_0^n}( \K^*)^n f (\gamma) - \la\Psi_0, f\ra  \tilde \Psi_0(\gamma) \Bigr| &\leq C e^{-\eps n} ||f||.
    \end{aligned}
    \ee
    \end{enumerate}
\end{lemma}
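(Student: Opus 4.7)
The plan is to deduce everything from the Jentzsch / Krein-Rutman theorem for compact positive irreducible operators on $L^2(E,\nu)$, and then upgrade the $L^2$ statements to hold pointwise for every $\gamma \in E$ via the boundedness of $K$ and Cauchy-Schwarz, in the spirit of \cite[Appendix A]{kunz1974one}. Since $0 \leq K(\gamma,\eta) \leq 1$ and $\nu$ is a probability measure, $K \in L^2(\nu \otimes \nu)$, so $\K$ is Hilbert--Schmidt and in particular compact; $\K$ is positive on the cone of $\nu$-a.e.\ nonnegative functions; and $z_0 := \|\K\| > 0$ since $\la \mathbf{1}, \K \mathbf{1}\ra > 0$. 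Because $\nu$ is a centered Gaussian on $E$ (both $\muxx$ and the weight $e^{-\rho\int \gamma^2\dd t}$ are reflection-invariant), $(Rf)(\gamma) := f(-\gamma)$ is a unitary involution of $L^2(E,\nu)$; the kernel identity $K(\gamma,\eta) = K(-\eta,-\gamma)$ yields $\K^* = R \K R$, so $\K$ and $\K^*$ share their spectrum, and part (b) will follow from (a) by setting $\tilde\Psi_0 := R\Psi_0$.

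The main obstacle is verifying an irreducibility/primitivity condition that gives simplicity of $z_0$ and strict positivity of the eigenfunction. I would show that for every pair of continuous paths $\gamma,\eta \in E$ and every $n$ large enough (depending on $\|\gamma - \eta\|_\infty$), the iterated kernel $K^n(\gamma,\eta)$ is strictly positive: choose interpolants $\xi_k(t) \approx \gamma(t) + (k/n)(\eta(t) - \gamma(t))$ shifted downward by a fixed amount in $(0,\lambda)$, so that for large $n$ each consecutive constraint $\xi_{k-1}(t) < \xi_k(t) + \lambda$ holds with uniform slack; then use that $\nu$ has full topological support in $E$ (inherited from Brownian bridge) to conclude that the set of such $(\xi_1,\ldots,\xi_{n-1})$ carries positive $\nu^{\otimes (n-1)}$-measure. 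Jentzsch's theorem then produces a $\nu$-a.e.\ strictly positive eigenfunction $\Psi_0 \in L^2$, gives simplicity of $z_0$, and gives a strict spectral gap $|\lambda'| \leq z_1 < z_0$ for all other spectral values $\lambda'$ of $\K$.

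For the pointwise part of (a) and for (c), redefine $\Psi_0(\gamma) := z_0^{-1}\int K(\gamma,\eta)\Psi_0(\eta)\,\nu(\dd\eta)$ for every $\gamma$; this coincides $\nu$-a.e.\ with the abstract eigenfunction and is now defined everywhere. For any $\gamma \in E$ the set $\{\eta : K(\gamma,\eta) = 1\} = \{\eta : \inf_t(\eta(t) - \gamma(t)) > -\lambda\}$ is open and nonempty in $E$, hence of positive $\nu$-measure; combined with strict a.e.-positivity of $\Psi_0$, this forces $\Psi_0(\gamma) > 0$ for every $\gamma$. Boundedness is immediate from Cauchy--Schwarz: $\Psi_0(\gamma) \leq z_0^{-1}\|K(\gamma,\cdot)\|_{L^2(\nu)} \|\Psi_0\|_{L^2(\nu)} \leq z_0^{-1}\|\Psi_0\|_{L^2(\nu)}$. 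The normalization $\int \Psi_0(\gamma)\Psi_0(-\gamma)\,\nu(\dd\gamma) = \la \tilde\Psi_0,\Psi_0\ra = 1$ is fixed by rescaling, since this integral is strictly positive.

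For part (d), decompose $\K = z_0 P + \K_1$, where $P f := \la \tilde\Psi_0, f\ra \Psi_0$ is the spectral projector onto the $z_0$-eigenspace and $\K_1$ has spectral radius $\leq z_1 < z_0$; this yields $\|z_0^{-n}\K^n - P\|_{L^2 \to L^2} \leq C e^{-\eps n}$ for any $\eps \in (0,\log(z_0/z_1))$. To promote this to a pointwise bound, use $P = z_0^{-1}\K P$ (since $\K P = z_0 P$) to write
\[
    z_0^{-n}\K^n f(\gamma) - P f(\gamma) = z_0^{-1}\int K(\gamma,\eta)\bigl(z_0^{-(n-1)}\K^{n-1}f(\eta) - Pf(\eta)\bigr)\,\nu(\dd\eta),
\]
and apply Cauchy--Schwarz together with $\|K(\gamma,\cdot)\|_{L^2(\nu)} \leq 1$ to obtain the claimed $\gamma$-uniform bound; the analogous statement for $\K^*$ follows by conjugating with $R$.
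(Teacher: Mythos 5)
Your proposal is correct and follows essentially the same route as the paper: Hilbert--Schmidt compactness plus irreducibility feeding into Krein--Rutman/Jentzsch for (a), the kernel symmetry $K(\gamma,\eta)=K(-\eta,-\gamma)$ for (b), Cauchy--Schwarz against $\|K(\gamma,\cdot)\|_{L^2(\nu)}\leq 1$ for (c) and for upgrading the $L^2$ spectral-gap estimate to a $\gamma$-uniform bound in (d), with the same decomposition $\K^n=z_0^n|\Psi_0\ra\la\tilde\Psi_0|+(\K-z_0|\Psi_0\ra\la\tilde\Psi_0|)^n$. The only substantive difference is that you actually spell out the irreducibility check via interpolating paths and the support of $\nu$ (which, strictly speaking, is the set of loops $E_0$ rather than all of $E$, a harmless adjustment), a step the paper dismisses as "easily checked."
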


\begin{proof}
 (a) One can easily check that our operator is Hilbert-Schmidt and irreducible, and that it maps non-negative functions to non-negative functions.  The Krein-Rutman theorem \cite{krein1948linear} shows that $z_0$ is a simple eigenvalue, and the eigenfunction can be chosen strictly positive. Hence there is a unique  $\Psi_0 \in L^2(E,\nu)$ such that $\int_E \Psi_0(-\gamma) \Psi_0(\gamma) \nu(\dd \gamma) =1$,  $\Psi_0(\gamma) >0$ and $(\K\Psi_0)(\gamma) = z_0 \Psi_0(\gamma)$ for $\nu$-almost all $\gamma$. Asking that the last equality holds for all $\gamma\in E$ removes the ambiguity on $\nu$-null sets.
\\
    Part (b) follows from the  symmetry $K(\gamma,\tilde \gamma) = K(- \tilde \gamma, -\gamma)$.    For later purposes we also note the following: the projection $|\Psi_0\ra \la \tilde \Psi_0|:\ f\mapsto \la \tilde \Psi_0, f\ra \Psi_0$ satisfies $\K |\Psi_0\ra \la \tilde \Psi_0| = |\Psi_0\ra \la \tilde \Psi_0| \K = z_0 |\Psi_0\ra \la \tilde \Psi_0|$.
    \\
    (c) The Cauchy-Schwarz inequality yields
    \be
        \Psi_0(\gamma) = z_0^{-1} \int_E K(\gamma,\tilde \gamma) \Psi_0(\tilde \gamma) \nu(\dd \tilde \gamma) \leq z_0^{-1} ||\Psi_0|| (\int_E K(\gamma,\tilde \gamma)^2 \nu(\dd \tilde \gamma) )^{1/2}.
    \ee
\\
    (d)
    The spectrum of $\K$  consists of eigenvalues only because $\K$ is compact. We have also just shown that $z_0$ is a simple eigenvalue of $\K$ and that every other eigenvalue has a strictly smaller absolute value. Let
    \be
        z_1:=\max \Bigl \{ |\lambda| : \lambda \in \sigma(\K),\ \lambda \neq z_0 \}<z_0 \Bigr \}
    \ee
    The operator $\K - z_0 |\Psi_0\ra \la \tilde \Psi_0|$ is compact and has spectral radius $z_1$,
    \be
        \lim_{n\to \infty} \frac{1}{n} \log || \bigl(\K - z_0 |\Psi_0\ra \la \tilde \Psi_0|\bigr)^n || =  z_1.
    \ee
    Since $$\K (\K - z_0 |\Psi_0\ra \la \tilde \Psi_0|) = (\K - z_0 |\Psi_0\ra \la \tilde \Psi_0|) \K = 0,$$ we have $$\K^n = (\K - z_0 |\Psi_0\ra \la \tilde \Psi_0|)^n + z_0^n |\Psi_0\ra \la \tilde \Psi_0|,$$ and we deduce that for suitable $\eps,C'>0$ and all $n\in \N$,
    \be \label{eq:kmix}
         || \frac{1}{z_0^n} \K^n - |\Psi_0\ra \la \tilde \Psi_0|\,  ||\leq C' \exp( - \eps n).
    \ee
    The proof is completed by applying the Cauchy-Schwarz inequality as in the proof of (c) and then applying the inequality~\eqref{eq:kmix}
    \be
    \begin{aligned}
        \Bigl| \bigl(\frac{1}{z_0^n} \K^n f - |\Psi_0\ra \la \tilde \Psi_0| f\bigr)(\gamma)  \Bigr|
            & = \Bigl| \frac{1}{z_0} \K \bigl( \frac{1}{z_0^{n-1}} \K^{n-1} - |\Psi_0\ra \la \tilde \Psi_0|\bigr) f(\gamma) \Bigr|\\
        &\leq  ||\bigl( \frac{1}{z_0^{n-1}} \K^{n-1} - |\Psi_0\ra \la \tilde \Psi_0|\bigr) f(\gamma) || \\
        & \leq  C' e^{- \eps(n-1)} ||f||.
    \end{aligned}
    \ee
    The proof for the adjoint is similar.
\end{proof}

\section{Symmetry breaking}\label{sec:paul}

\begin{proof}[Proof of Theorem \ref{thm:symbreak2}]
We start with the existence of the limiting probability measure for \emph{labeled} particles; the existence of $\nu_\R^0$ (on particles without labels, i.e., a point process) will be a direct consequence.
Using Lemma \ref{prop:nab} and Kolmogorov's extension theorem, it is enough to prove  the existence along \eqref{eq:thermolim} of $\nu_\R:=\lim\nu_{a,b}$ in the sense of weak convergence of the finite-dimensional cylinder distributions.  For this to make sense, we must first relabeling the bridges so as to view $\nu_{a,b}$ as a measure on $E^{\{a\rho +1,\ldots,b \rho\}}$ rather than $E^{\{1,\ldots,N\}}$.

More specifically, we relabel each of the $N$ conditioned Brownian bridge measures using sub-indices $\ell_j,j\in\Z$.
The sub-index shifts the original index by $a\rho$ so that
\be
    \ell_{a\rho+1}=1,\ldots, \ell_0=-a\rho, \ldots, \ell_{a\rho+N}=\ell_{b\rho}=N
\ee
and in general $\ell_{j}=-a\rho+j$.   Note that in the limit~\eqref{eq:thermolim}, we have $a\rho +1\to -\infty$ and $ a\rho + N = b\rho  \to \infty$, and the lattice points defined in~ \eqref{eq:xj} satisfy
\be  \label{eq:mlj}
        m_{\ell_j} = (j- 1/2) \lambda.
\ee
We will show that for every fixed $(j_1,\ldots,j_n) \in \Z^n$, the law of $(\gamma_{\ell_{j_1}},\ldots,\gamma_{\ell_{j_n}})$ under $\nu_{a,b}$ converges weakly to a measure on $E^{\{j_1,\ldots,j_n\}}$. The family of measures obtained in this way satisfies the consistency conditions required by Kolmogorov's extension theorem and hence are the cylinder marginals of a uniquely defined measure $\nu_\R$ on $E^\Z$. In this sense, we obtain $\lim \nu_{a,b}=  \nu_\R$.

 For simplicity, we show the convergence of cylinder marginals  only for the single
marginal $\gamma_{\ell_1}$ since the general argument is similar (note that $m_{\ell_1}$ is the smallest positive $m_j$ as defined in \eqref{eq:xj}, see ~\eqref{eq:mlj}).  We will leave it to the reader to confirm that the limits are consistent in the sense required by Kolmogorov's theorem.

We show pointwise convergence to a probability density, under the limit \eqref{eq:thermolim}, which implies weak convergence of the distributions of $\gamma_{\ell_1}$.  Fix $\bar\gamma\in E$ and  for a system of $N$ bridges
recall that $\tilde\gamma_\ell(t)= \gamma_\ell(t)-m_\ell$.
By Lemma \ref{prop:nab}, we have a density, with respect to $\nu_{\ell_1}$, for the $\ell_1$-th bridge given by
\bea
&&f_{a,b}(\bar\gamma):=\nn\\
 &&\frac{1}{\hat\nu_{a,b}(W_N^\beta(a,b))}\int_{a<\gamma_1<\cdots<\gamma_{\ell_{0}}<\bar\gamma<\gamma_{\ell_2}<\cdots<\gamma_N<b} \dd\nu_1 \cdots \dd\nu_{\ell_{0}}\dd\nu_{\ell_2}\cdots \dd\nu_N. \nn
 \eea
Using the adjoint  $\K^*$ defined in \eqref{eq:adjoint} and  the  arguments in  Lemma \ref{lem:FG}, we obtain the shifted density with respect to $\nu$:
\bea \tilde f_{a,b}(\bar\gamma)&=&\frac{1}{\hat\nu_{a,b}(W_N^\beta(a,b))} ((\K^*)^{\ell_1-1} F)({\bar\gamma}) (\K^{N-\ell_1} G)(\bar\gamma)\nn\\
&=&\frac{((\K^*)^{\ell_1-1} F)(\bar\gamma) (\K^{N-\ell_1} G)(\bar\gamma)}{\langle F,\K^{N-1} G\rangle}.\nn
\eea
By Lemma \ref{lem:krein}, since $\ell_1\to\infty$ under the limit \eqref{eq:thermolim}, we have \be \label{eq:one-marginal-lim}
\tilde f_{a,b}(\bar\gamma)
\longrightarrow \frac{\langle F, \Psi_0\rangle\tilde\Psi_0(\bar\gamma)\langle \tilde\Psi_0, G\rangle\Psi_0(\bar\gamma)}{\langle F,\Psi_0\rangle \langle \tilde\Psi_0, G\rangle} =\tilde\Psi_0(\bar\gamma)\Psi_0(\bar\gamma)=:\tilde f(\bar\gamma)
\ee
where $\Psi_0$ is the positive eigenvector associated to the largest eigenvalue $z_0$ of $\K$, $\tilde \Psi_0(\gamma):= \Psi_0(-\gamma)$, and $\la \tilde \Psi_0,\Psi_0\ra =1$ as in Lemma~\ref{lem:krein}. 
Note that $\tilde f(\gamma)=\tilde f(-\gamma)$ is even. 
 Note also that the value of $\ell_1$ appears in the  argument only through the fact that
$\ell_1\to\infty$ under \eqref{eq:thermolim}. Thus, the single-bridge marginal distributions are shifts of each other under integer multiples of  $\lambda $.
  As mentioned earlier, the above argument can be extended to show the existence of other limiting cylinder marginals of $\nu_\R$ as well as the fact that they are translations of each other under the shifts $\lambda \Z$.

 Thus we have shown that the measures for
bridges labeled by the shifted indices converge to a measure
$\nu_\R$ on $E^\Z$ for infinitely many bridges. By
Lemma~\ref{prop:nab}, it follows that the measure for labeled
particles converges to the law of $\bigl( \gamma_j(0) \bigr)_{j\in
\Z}$ under $\nu_\R$. From this one can deduce that $\nu_{a,b}^0$,
which is the law of $\{\gamma_1(0),\ldots,\gamma_N(0)\}$ under
$\nu_{a,b}$, converges weakly to the law of $\{\gamma_j(0) \mid j
\in \Z\}$ under $\nu_\R$. A technicality arises because the
map $\R^\Z \to \Omega$, $(x_j)_{j\in \Z} \mapsto \{x_j\mid j \in
\Z\}$ is not, in general, continuous. For example, it may not map a finite number of points to a finite interval. But one can check using \eqref{eq:nu-bound} that in our case, 
the mapping is a.s. continuous.  We leave the details to the reader.

We turn now to the mutual singularity of measures shifted by non-integer multiples of $\lambda$. 
It follows from Eq.~\eqref{eq:one-marginal-lim}, the evenness of
$\tilde f(\tilde \gamma) = \tilde f(- \tilde \gamma)$, and the
evenness of the reference Gaussian measures from
Eq.~\eqref{eq:numeas}, that the law of $\gamma_j(0) $ is invariant
under reflections around $(j-1/2) \lambda$. Thus the random variable
$Y_j:( E^\Z,\nu_\R) \to \R$ defined by
\be \nn
    Y_j(\vect{\gamma}):=  \gamma_j(0) - (j-1/2) \lambda
\ee
satisfies
\be \label{eq:means}
    \E Y_j = 0.
\ee
Let $\tau_\lambda(\vect{\gamma}) := \bigl( \gamma_{j+1} - \lambda \bigr)_{j\in \Z}$, so that
\be
   \nn Y_j(\vect{\gamma}) = Y_{j-1}(\tau_{\lambda }(\vect{\gamma})).
\ee
The sequence of random variables $(Y_j)_{j\in\Z}$ is stationary, and by Lemma \ref{lem:krein} and standard arguments, is also ergodic (in fact mixing). Thus, we have that $\nu_\R$-a.s., for every $k\in\Z$,
\be \label{eq:yj}
    \lim_{n\to\infty}\frac 1 n \sum_{j=1}^n Y_{k+j}=0.
\ee
Intuitively, we also have that $\nu_\R \circ\tau_{\lambda  u}$-a.s.,
\be \label{eq:y'j}
\lim_{n\to\infty}\frac 1 n \sum_{j=1}^n Y_{k+j} \equiv u \ \ (\text{mod}\, \lambda )
\ee
which proves the mutual singularity.  The rest of the proof is devoted to rigorously proving \eqref{eq:y'j}.

 For $\vect{\gamma}\in E^\Z$, let $k(\vect{\gamma}):= \min\{ j
\in \Z\mid \gamma_j(0) \geq 0\}$. One can  check that such a
$k(\vect{\gamma})$ exists a.s. using, for example, the
Borel-Cantelli lemma. Set \be \nn    Y'_j(\vect{\gamma}):=
\gamma_{k(\vect{\gamma}) +j}(0) - (j-1/2)\lambda. \ee We have \be \nn
    \frac{1}{n} \sum_{j=1}^n Y'_j (\vect{\gamma})
        = \frac{1}{n} \sum_{j=0}^{n-1} Y_{k(\vect{\gamma})+j } - k(\vect{\gamma})
 \lambda
\ee
thus
\be
    \lim_{n\to \infty} \exp\Bigl( \mathrm{i} \frac{2\pi}{\lambda} \frac{1}{n} \sum_{j=1}^n Y'_j \Bigr)  = 1\quad \nu_\R\text{-a.s.}
\ee
Since $Y'_j$ is a function of the set of starting points, we may rewrite the latter identity in terms of the point process: label the points $x_j = x_j(\omega)$ in a configuration $\omega\subset \R$ as  $\omega= \{ \cdots < x_0<0 \leq x_1<x_2<\cdots\}$. We have
\be \label{eq:plim}
    \lim_{n\to \infty} \exp\Bigl( \mathrm{i} \frac{2\pi}{\lambda} \frac{1}{n} \sum_{j=1}^n \bigl( x_j(\omega)- (j-1/2) \lambda\bigr) \Bigr)  = 1\quad \nu_\R^0\text{-a.s.}
\ee
 Let ${u} \in \R$ and $m=m({u},\omega)$ be such that $x_m (\omega)\leq - {u} <x_{m+1}(\omega)$. We have
\begin{multline}
    \exp\Bigl( \mathrm{i} \frac{2\pi}{\lambda} \frac{1}{n} \sum_{j=1}^n \bigl( x_j(\tau_{u} \omega)- (j-1/2) \lambda\bigr) \Bigr) \\
    = \exp\Bigl(\mathrm{i}  \frac{2\pi{u} }{\lambda} \Bigr) \times \exp \Bigl( \mathrm{i} \frac{2\pi}{\lambda} \frac{1}{n} \sum_{j=1}^n \bigl( x_{j+m({u},\omega)} (\omega)- (j-1/2) \lambda\bigr)\Bigr).
\end{multline}
An argument similar to the proof of Eq.~\eqref{eq:plim} shows that the second factor on the right side converges to $1$, $\nu_\R^0$-almost surely. It follows that
\be
    \lim_{n\to \infty} \exp\Bigl( \mathrm{i} \frac{2\pi}{\lambda} \frac{1}{n} \sum_{j=1}^n \bigl( x_j(\omega)- (j-1/2) \lambda\bigr) \Bigr)  = \exp\Bigl(\mathrm{i}  \frac{2\pi{u} }{\lambda} \Bigr) \quad (\nu_\R^0\circ \tau_{u})\text{-a s.}
\ee

%
\end{proof}

\paragraph{Remarks:}
\begin{enumerate}
\item[1.]
An elementary proof of a weaker version of Theorem \ref{thm:symbreak2} follows by applying Theorem 1.9 of \cite{brascamp1975some} to the expression \eqref{feynman} to conclude
tightness for a marginal distribution of a single conditioned Brownian bridge.  Then one can appeal to Theorem 2.1 of \cite{aizenman2001bounded}.  See also \cite[Sec. 4.2]{aizenman2010symmetry}.
\item[2.] A proof using the electric field as in \cite{aizenman1980structure, aizenman2010symmetry} is also possible.  Such a route would however require the vanishing of  volume averages for the electric field (see \cite[Thm 3.4]{aizenman2010symmetry}). The easiest way to achieve this, that the authors are aware of, is via the ergodic theorem which brings us back to the Ruelle operator and Krein-Rutman theorem (see \cite[Proof of Lemma 4]{aizenman1980structure}).
\end{enumerate}

\section{Free energy and reduced density matrices} \label{sec:free energy}

First we prove the result on the asymptotics of the partition function.

\begin{proof}[Proof of Theorem~\ref{thm:free-energy}] \label{proof:free-energy}
By \eqref{eq:tilde gamma}, we have
    \be
    \begin{aligned}
          \hat\nu_{a,b}\bigl(W_N^\beta(a,b)\bigr)&= \la F, \K^{N-1} G \ra \\
         & = z_0^{N-1} \la F, \tilde \Psi_0 \ra \la \Psi_0, G\ra + \la F, (\K - z_0 |\Psi_0\ra \la \tilde \Psi_0|)^{N-1} G\ra \\
        & = z_0^{N-1} \la F, \tilde \Psi_0 \ra \la \Psi_0, G\ra \Bigl( 1+ O \bigl( e^{-\eps N}  \bigr) \Bigr)
    \end{aligned}
    \ee
    with $\eps>0$ as in  Lemma~\ref{lem:krein}.
    Combined with  Eqs.~\eqref{eq:partfunction} and ~\eqref{eq:hotrace}, this yields
    \begin{multline}
        \log Z_N(\beta) = N\left( -  \frac{\beta }{12\rho} -  \log \bigl( 2\sinh( \beta \sqrt{{\rho}/{2}} \bigr) + \log z_0 \right) \\
            -  \log z_0 + \log \la F, \tilde \Psi_0\ra+ \log \la \Psi_0, G\ra + O \bigl( e^{- \eps N} \bigr).
    \end{multline}
\end{proof}

Next we come to the existence of  reduced density matrices in the thermodynamic limit and to symmetry breaking.

\begin{proof} [Proof of Theorem~\ref{thm:corr-funct}]
    (i) We start with the one-particle matrix. Let $x\leq y$, $j\le k$,  and $\nu_{xjky}$ as in  Lemma~\ref{lem:reduced-matrix}. The quotient $\nu_{xjky}\bigl( W_N^\beta(a,b)\bigr)/\hat \nu_{a,b}\bigl( W_N^\beta(a,b)\bigr)$ is equal to the integral of
    \be \label{eq:qjk}
        \frac{\bigl( (\K^*)^{j-1} F\bigr)(\gamma_j) \times
             K(\gamma_j,\gamma_{j+1}) \cdots K(\gamma_{k-1},\gamma_k)
        \times \bigl( \K^{N-k}G\bigr) (\gamma_k) }{\la F,\K^{N-1}G\ra}
    \ee
    against
    \be \label{eq:loop-lim}
        \int_{\R^{k-j+1}} \nu_{x -m_j,x'_j-m_j} \times
            \nu_{x'_j - m_{j+1}, x'_{j+1} - m_{j+1}} \times \cdots
         \times  \nu_{x'_k-m_k, y-m_k} \, \dd x'_j \cdots \dd x'_k.
    \ee
    Let $I_N(x,y;m_j,m_k)$ be the resulting integral. Note the one-to-one correspondence between the index set $\{1,\ldots,N\}$ and the finite lattice $\mathcal{L}_N = \{m_1,\ldots,m_N\}$, hence we may replace sums over $j$ and $k$ by sums over lattice points. The one-particle matrix is
    \be
        \rho_1^N(x;y) = \sum_{\stackrel{\ell,\ell'\in \mathcal{L}_N}{\ell\leq \ell'}} (-1)^{\rho (\ell'-\ell)} I_N(x,y;\ell,\ell').
    \ee
    Next, relabel the paths $\gamma_j,\ldots,\gamma_k$ as $\gamma_\ell,\gamma_{\ell+\lambda},\ldots,\gamma_{\ell'}$ with $\ell=m_j$ and $\ell'= m_k$.
    Let $j\to \infty$ and $N-k\to \infty$ in such a way that $\ell $ and $\ell'$ stay fixed. In this limit, the expression~\eqref{eq:qjk} converges  to
    \be \label{eq:qjk-lim}
        \Psi_0(-\gamma_\ell)   \frac{ K(\gamma_\ell,\gamma_{\ell+\lambda})}{z_0}  \cdots \frac{K(\gamma_{\ell'-\lambda},\gamma_{\ell'})}{z_0} \Psi_0(\gamma_{\ell'}).
    \ee
    uniformly on $E^{k-j}$. The measures $\nu_{xy}$ have total masses bounded by
    \be  \label{eq:nu-bound}
        \nu_{xy}(E) \leq C' \exp \Bigl( - C (x^2+y^2) \Bigr)
    \ee
    for suitable constants $C,C'>0$ and all $x,y\in R$ (see \eqref{eq:rd_deriv} and Proposition \ref{prop:gaussian}). Therefore the measure ~\eqref{eq:loop-lim} is a finite measure with total mass bounded by
    \be
        D^{\rho(\ell'-\ell)} \exp\Bigl( - C (x-\ell)^2 - C (y- \ell')^2 \Bigr)
    \ee
    for some $D>0$ and all $x,y\in \R$. We can exchange limits and integration: $I_N(x,y;\ell,\ell')$ converges to the integral $I(x,y;\ell,\ell')$ of the expression~\eqref{eq:qjk-lim} against the measure~\eqref{eq:loop-lim}.

 To check that we can also bring the limit \eqref{eq:thermolim} inside the sum
    $$
\lim \rho_1^N(x;y) = \lim \sum_{\stackrel{\ell,\ell'\in \mathcal{L}_N}{\ell\leq \ell'}} (-1)^{\rho (\ell'-\ell)} I_N(x,y;\ell,\ell'),
    $$
we bound $I_N(x,y; \ell,\ell')$ as follows. By Lemma~\ref{lem:krein} there is a $c>0$ such that for all $N,j,\gamma$,
    we have
    $\la F,\K^{N-1} G\ra \geq c z_0^{N-1}$,
    $ \frac{1}{z_0^{N-j}} \K^{N-j} G(\gamma) \leq c$, and $ \frac{1}{z_0^{j-1} }(\K^*)^{j-1} F(\gamma) \leq c$.  As a consequence,
    \be \label{eq:ibound}
    \begin{aligned}
         I_N(x,y;j,k)  \leq \Bigl(\frac{D C'}{z_0}\Bigr)^{\rho(\ell-\ell')}
            \exp\Bigl( - C (x-\ell)^2 - C (y- \ell')^2 \Bigr).
    \end{aligned}
    \ee
    The bound is independent of $N$ and its sum over $\ell,\ell'\in \mathcal{L}$ is finite, where
    $\mathcal{L} = \lambda/2+ \lambda \Z$.
 By dominated convergence, we see that we can exchange the sum and limit and obtain that
    \be \label{eq:one-point-function}
        \rho_1(x;y)= \lim \rho_1^N(x;y) = \sum_{\stackrel{\ell,\ell'\in \mathcal{L}}{\ell\leq \ell'}} (-1)^{\rho(\ell'-\ell)}   I(x,y;\ell,\ell')
    \ee
    The convergence is uniform on compact subsets of $\R\times \R$ because the sum over $\ell,\ell'\in \mathcal{L}$ of the last line in~\eqref{eq:ibound} is a locally bounded function of $x$ and $y$. This proves part (1) of Theorem~\ref{thm:corr-funct} for the one-particle matrix.
    The proof for the $n$-particle reduced density matrices is similar and therefore omitted; the roles of $j$ and $k$ (resp. $\ell$ and $\ell'$) are played, loosely speaking, by the smallest and largest index belonging to some open loops.

    In finite volume, the reduced density matrices $\rho_n^N$ are continuous functions of $\vect{x}$ and $\vect{y}$ because the integral kernel of $\exp(-\beta H_N)$ is continuous.
    The limits $\rho_n$, as  locally uniform limits of continuous functions, are also continuous.

    (ii) The invariance under shifts by multiples of $\lambda $ is immediate from the expressions~\eqref{eq:one-point-function}  and the covariance
    \be
        I(x-\lambda,y-\lambda;\ell,\ell') = I(x,y;\ell+ \lambda,\ell'+\lambda)
    \ee
    (and its $n$-particle analogues) inherited from the covariance of the measure~\eqref{eq:loop-lim}.

    In order to get to the smallest period, we apply Theorem~\ref{thm:symbreak2}. First we note that the diagonals ($\vect{x}=\vect{y}$) of the reduced density matrices are nothing else but the \emph{factorial moment densities} \cite[Chapter 5.4]{daley2007introduction}, also called \emph{product densities} or \emph{correlation functions},  of the measures $\nu_{a,b}^0$, considered as point processes. This statement survives in the thermodynamic limit. Thus for every interval $\cI\subset \R$, and all $n\in \N$,
\be
    \int_{\cI^n} \rho_n(\vect{x};\vect{x}) \dd \vect{x}
     = \sum_{k=0}^\infty k(k-1)\cdots(k-n+1)\, \nu_\R^0\Bigl(\{\text{there are exactly $k$ particles in $\cI$}\}\Bigr)
\ee
Let $N_{\cI}$ be the number of particles in the interval $\cI$. The previous equation shows that the set of functions $\rho_n(\vect{x};\vect{x})$, $n\in \N$, determine the moments of random variables $N_\cI$.  Because of Lemma~\ref{lem:ni-tight} below, $N_\cI$ satisfies Carleman's condition and the moments of $N_\cI$ determine the law of $N_\cI$ uniquely. Since the point process $\nu_\R^0$ in turn is uniquely determined by the law of the variables $N_\cI$, $\cI$ running over the intervals in $\R$, we see that the measure $\nu_\R^0$ is uniquely determined by the $\rho_n(\vect{x};\vect{x})$.

The same argument applies of course to the shifted measure $\nu_\R^0\circ \tau_\theta$ for $\theta \notin \lambda\Z$, which has factorial moment densities $\rho_n(\vect{x}-\theta;\vect{x}-\theta)$. The mutual singularity of the shifted measure to the original measure then implies that there must be an $n\in \N$ and an $\vect{x} \in \R^n$ such that
$\rho_n(\vect{x}-\theta;\vect{x}-\theta) \neq \rho_n(\vect{x};\vect{x})$.
\end{proof}

\begin{lemma} \label{lem:ni-tight}
    Let $N_{[x,y)}$ be the (random) number of particles in $[x,y)$.
    Then for suitable $\alpha,C>0$, all $x,y\in\R$ with $x<y$, and every $n\in \N$,
    \be
        \nu_\R^0\Bigl(  \bigl|N_{[x,y)}- \rho(y-x)\bigr| \geq n\Bigr) \leq  C \exp( -\alpha n^2).
    \ee
\end{lemma}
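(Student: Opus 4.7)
The plan is to combine a uniform sub-Gaussian tail bound for the displacement of each labeled particle from its nominal lattice site with the strict ordering of the points. After relabeling the bridges as in the proof of Theorem~\ref{thm:symbreak2}, I regard the sequence $\xi_j := \gamma_j(0) - (j-\frac{1}{2})\lambda$, $j \in \Z$, under $\nu_\R$: by the shift invariance under $\tau_\lambda$ (which shifts the label by one), this sequence is stationary, so all $\xi_j$ share one marginal law.

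The first step is to show that there exist constants $\alpha_0, C_0 > 0$ such that $\nu_\R(|\xi_j| \ge t) \le C_0 e^{-\alpha_0 t^2}$ for every $t\ge 0$ and every $j$. By stationarity it suffices to treat one index. From the analysis of single-bridge marginals in the proof of Theorem~\ref{thm:symbreak2}, in particular~\eqref{eq:one-marginal-lim}, the density of the shifted path $\bar\gamma = \gamma_{\ell_1}(\cdot) - m_{\ell_1}$ under $\nu_\R$ with respect to the reference measure $\nu$ is the bounded function $\tilde\Psi_0\Psi_0$, where boundedness is Lemma~\ref{lem:krein}(c). Marginalizing against the $\gamma(0)$-projection of $\nu$, which is an explicit Gaussian by Proposition~\ref{prop:gaussian}, yields a Lebesgue density for $\xi_{\ell_1}$ dominated pointwise by a constant times a Gaussian, hence the desired sub-Gaussian tail.

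The second step exploits the strict ordering of particle positions (inherited in the thermodynamic limit from the Weyl-chamber conditioning) together with a pigeonhole argument. Setting $j^*(x) := \lceil x/\lambda + \frac{1}{2}\rceil$, one has $m_{j^*(x)-1} < x \le m_{j^*(x)}$ and $|j^*(y)-j^*(x)-\rho(y-x)| \le 1$. If $N_{[x,y)} \ge \rho(y-x) + n$, then by the ordering either the leftmost index with $\gamma_{j_L}(0) \in [x,y)$ satisfies $j_L \le j^*(x) - \lceil n/2 \rceil + 1$, forcing some $k \ge \lceil n/2\rceil - 1$ with $\xi_{j^*(x)-k} \ge (k-1)\lambda$, or the symmetric statement holds at the right endpoint. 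A union bound over $k$ using Step~1 then gives
\[
    \nu_\R\bigl(N_{[x,y)} - \rho(y-x) \ge n\bigr)
        \le 2\sum_{k \ge \lceil n/2\rceil - 1} C_0\, e^{-\alpha_0(k-1)^2\lambda^2}
        \le C\, e^{-\alpha n^2},
\]
and lower deviations are handled by the symmetric argument with particles whose nominal sites lie inside $[x,y)$ but whose actual positions lie outside.

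The main obstacle is the first step: once the uniform-in-$j$ sub-Gaussian control on $\xi_j$ is in hand, the geometric-combinatorial argument and Gaussian union bound are routine. Establishing Step~1 requires carefully tracking the marginalization from path space to the single coordinate $\gamma(0)$, combining boundedness of $\Psi_0$ from Lemma~\ref{lem:krein}(c) with the explicit Gaussian structure of $\nu$ from Proposition~\ref{prop:gaussian}, and leveraging stationarity from the proof of Theorem~\ref{thm:symbreak2} to make the estimate uniform in $j$.
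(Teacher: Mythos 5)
Your proposal is correct and follows essentially the same route as the paper: a uniform sub-Gaussian tail for the displacement $\xi_j=\gamma_j(0)-(j-1/2)\lambda$, obtained from the boundedness of $\Psi_0$ (Lemma~\ref{lem:krein}(c)) together with the Gaussian law of $\gamma(0)$ under $\nu$ (Proposition~\ref{prop:gaussian}), combined with the ordering of the particles to convert a count excess in $[x,y)$ into a large displacement of some labeled particle. The only cosmetic difference is that the paper packages the ordering argument via the excess numbers $\mathcal{K}(x),\mathcal{K}(y)$ and reduces $\{\mathcal{K}(0)\geq n\}$ to the single event $\{X_{-n}\geq 0\}$ by monotonicity, whereas you use a pigeonhole plus a union bound over $k$, which yields the same $e^{-\alpha n^2}$ estimate.
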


\begin{proof}
    In the proof of Theorem \ref{thm:symbreak2}, we showed that the point process is  $\nu_\R^0$ is the  law of $\{X_j\mid j \in \Z\}$ for a sequence  $(X_j)_{j\in \Z}$ of random variables such that $\cdots < X_j < X_{j+1} <\cdots$ and $\E X_j = (j-1/2)\lambda$. The $X_j$'s are the starting points of bridges.  We note that for suitable $C,\alpha >0$,
    \be
         \P\bigl( |X_j - (j-1/2)\lambda| \geq m \bigr) \leq C \exp( - \alpha m^2).
    \ee
    This follows because $X_j - (j-1/2)$ equals $\tilde \gamma_j(0)$, which has the law
    $\Psi_0(-\tilde \gamma)\Psi_0(\tilde \gamma) \nu(\dd\tilde  \gamma)$; $\Psi_0$ is bounded by Lemma~\ref{lem:krein}, and the law of $\gamma(0)$ under $\nu$ is Gaussian (see Appendix~\ref{app:ornstein}). For $x\in \R$, define the random variable
    \be \nn
        \mathcal{K}(x) := \mathrm{card} \{ j \in \Z \mid (j-1/2)\lambda <x,\  X_j \geq x \}
                 - \mathrm{\card} \{ j\in \Z \mid (j-1/2)\lambda \geq x,\ X_j <x\}.
    \ee
    $\mathcal{K}(x)$ is a particle excess number: it counts the number of particles that should be to the right of $x$ but are to the left, minus those that should be to the right but are to the left.
    The number of particles in the interval $[x,y)$ ($x<y$) equals
    \be \label{eq:kn}
        N_{[x,y)} = \card\{j\in \Z \mid (j-1/2)\lambda \in [x,y) \} + \mathcal{K}(x) - \mathcal{K}(y).
    \ee
    Lemma~\ref{lem:ni-tight} follows from estimates on $\mathcal{K}(x)$ and $\mathcal{K}(y)$.
    Consider first $\mathcal{K}(0)$.  Let $n\in \N_0$.
    By using that the $X_j$'s are ordered from left to right, we obtain
    \be
    \begin{aligned}
        \P( \mathcal{K}(0) \geq n ) & \leq \P \Bigl (  \mathrm{card} \{ j \in \Z \mid (j-1/2)\lambda <x,\  X_j \geq x \}  \geq n \Bigr) \\
                & \leq \P \bigl( X_{-n} \geq 0\bigr) = \P( Y_0 \geq n\lambda)
                \leq C \exp( - \alpha (n+\lambda/2)^2).
    \end{aligned}
    \ee
    A similar reasoning yields an estimate of $\P(\mathcal{K}(0)\leq - n)$ and of the deviation probabilities of $\mathcal{K}(x)$, $\mathcal{K}(y)$. Lemma~\ref{lem:ni-tight} then follows from Eq.~\eqref{eq:kn}.
\end{proof}

\appendix

\section{Non-colliding Ornstein-Uhlenbeck bridges}
\label{app:ornstein}
Let $E_0$ be the subset of $E$ consisting of all continuous paths on $[0,\beta]$ with the same starting and ending points.
Here we show that $\nu$, as defined in \eqref{eq:numeas}, is a probability measure on $E_0$ under which $(\gamma(t))_{0\leq t \leq \beta}$ is a Gaussian process:
\begin{prop}\label{prop:gaussian}
    The constants in \eqref{eq:rd_deriv} are calculated as
    \be \label{eq:hotrace}
        c(\beta,\rho) = \frac{1}{2 \sinh(\beta \sqrt{\rho/2})}= \frac{\exp( - \beta \sqrt{\rho/2})}{1 - \exp( - \beta \sqrt{2\rho})}.
    \ee
    Furthermore, for all $0\leq t_0<t_1<\cdots<t_r<\beta$, the vector $(\gamma(t_1),\ldots,\gamma(t_r))$ under the measure $\nu$ is Gaussian, and the variance of $\gamma(0)$  is
    \be
        \sigma(\beta,\rho)^2 = \Bigl[  2\sqrt{2\rho} \tanh\Bigl(\beta \sqrt{\frac{\rho}{2}}  \Bigr)\Bigr]^{-1}.
    \ee
\end{prop}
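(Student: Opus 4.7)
The plan is to identify the (unnormalized) measure $\int_\R \exp(-\rho \int_0^\beta \gamma(t)^2 \dd t) \mu_{xx}(\dd \gamma) \dd x$ with the Feynman–Kac representation of the quantum harmonic oscillator, and then read off everything from Mehler's formula.

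Step 1. Observe that the operator $H^{\mathrm{ho}} := -\tfrac12 \tfrac{\dd^2}{\dd x^2} + \rho x^2$ is a harmonic oscillator with angular frequency $\omega = \sqrt{2\rho}$ (since $\tfrac12\omega^2 = \rho$). By the standard Feynman–Kac formula \cite{simon1979functional}, for every $x,y\in\R$,
\be\nn
    e^{-\beta H^{\mathrm{ho}}}(x,y) = \mu_{xy}\Bigl(\exp\bigl(-\rho \int_0^\beta \gamma(t)^2 \dd t\bigr)\Bigr).
\ee
Combining this with the definition \eqref{eq:rd_deriv} of $\nu_{xy}$ gives $\nu_{xy}(E) = c(\beta,\rho)^{-1} e^{-\beta H^{\mathrm{ho}}}(x,y)$, and therefore
\be\nn
    \nu(E) = \int_\R \nu_{xx}(E) \dd x = \frac{1}{c(\beta,\rho)} \Tr e^{-\beta H^{\mathrm{ho}}}.
\ee
Requiring $\nu$ to be a probability measure forces $c(\beta,\rho) = \Tr e^{-\beta H^{\mathrm{ho}}}$, which via the spectrum $\omega(n+\tfrac12)$, $n\in\N_0$, equals $\tfrac{1}{2\sinh(\beta\omega/2)} = \tfrac{1}{2\sinh(\beta\sqrt{\rho/2})}$. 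This is exactly \eqref{eq:hotrace}.

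Step 2. For the finite-dimensional distributions, fix $0 \leq t_1 < \cdots < t_r \leq \beta$. The construction of $\mu_{xx}$ and the Feynman–Kac formula give, unwinding definitions,
\be\nn
    \nu\bigl(\gamma(t_1)\in \dd y_1,\ldots,\gamma(t_r)\in \dd y_r\bigr) = \frac{1}{c(\beta,\rho)} \int_\R e^{-t_1 H^{\mathrm{ho}}}(x,y_1) \prod_{i=1}^{r-1} e^{-(t_{i+1}-t_i)H^{\mathrm{ho}}}(y_i,y_{i+1}) \, e^{-(\beta-t_r)H^{\mathrm{ho}}}(y_r,x) \dd x \prod_i \dd y_i.
\ee
Mehler's formula expresses each $e^{-s H^{\mathrm{ho}}}(u,v)$ as a Gaussian density in $(u,v)$; a product of Gaussian kernels followed by one Gaussian integration over $x$ is again a Gaussian density in $(y_1,\ldots,y_r)$. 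Hence $(\gamma(t_1),\ldots,\gamma(t_r))$ is Gaussian under $\nu$.

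Step 3. For the variance of $\gamma(0)$, apply Mehler's formula on the diagonal. Using $\cosh(\beta\omega)-1 = 2\sinh^2(\beta\omega/2)$ and $\sinh(\beta\omega) = 2\sinh(\beta\omega/2)\cosh(\beta\omega/2)$,
\be\nn
    e^{-\beta H^{\mathrm{ho}}}(x,x) = \sqrt{\frac{\omega}{2\pi\sinh(\beta\omega)}}\, \exp\bigl(-\omega \tanh(\beta\omega/2)\, x^2\bigr).
\ee
Dividing by $c(\beta,\rho)$ yields a centered Gaussian density for $\gamma(0)$ with variance $\sigma(\beta,\rho)^2 = \bigl[2\omega\tanh(\beta\omega/2)\bigr]^{-1}$; substituting $\omega=\sqrt{2\rho}$ (so $\omega/2 = \sqrt{\rho/2}$) gives the stated formula.

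There is really no substantial obstacle: the proposition reduces to matching the path integral definition of $\nu$ with a well-known spectral/kernel computation for the harmonic oscillator. The only point to be careful about is the bookkeeping around factors of $\rho$ versus $\tfrac12\omega^2$ and the relation between the semigroup on $\R$ (which has infinite trace) and the confining oscillator (which has finite trace) — but both are handled at the outset by reading the potential $\rho x^2$ as $\tfrac12 \omega^2 x^2$ with $\omega=\sqrt{2\rho}$.
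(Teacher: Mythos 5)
Your proposal is correct and follows essentially the same route as the paper: identify $c(\beta,\rho)$ with $\Tr\exp(-\beta A)$ for the harmonic oscillator $A=-\tfrac12\tfrac{\dd^2}{\dd x^2}+\rho x^2$, obtain the finite-dimensional distributions as products of Mehler kernels, and read off the variance of $\gamma(0)$ from the diagonal kernel. The only cosmetic difference is that you evaluate the trace via the spectrum $\sqrt{2\rho}\,(n+\tfrac12)$ rather than by integrating the diagonal of Mehler's kernel; both give $\bigl(2\sinh(\beta\sqrt{\rho/2})\bigr)^{-1}$.
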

The result was essentially proven in~\cite[Eqs. (1.12)-(1.13)]{brascamp1975some}, we provide some more technical details.
\begin{proof}
     First we note that  $c(\beta,\rho) = \Tr \exp( - \beta A)$
    where $A$
    \be
        A = - \frac{1}{2} \frac{\dd^2}{\dd x^2} + \rho x^2
    \ee
    is the Hamiltonian of a harmonic oscillator. It is well-known that $A$, with a suitable domain, is a self-adjoint operator on $L^2(\R)$ \cite[Thm. X.28]{reed1980methods2}. The associated semi-group $\exp(-t A)$
is an integral operator with kernel
    \be \nn
        k_t(x,y) = \frac{(2\rho)^{1/4}}{\sqrt{2\pi \sinh(t\sqrt{2\rho} )}}\,
        \exp \left( -  \frac{\sqrt{2\rho} (x^2+y^2)}{2\tanh(t\sqrt{2\rho} )}
            + \frac{ \sqrt{2\rho} xy }{\sinh(t\sqrt{2\rho} )} \right)
    \ee
    (obtained from \emph{Mehler's formula} \cite[p. 55]{simon1979functional} by a change of variables).  In particular,
    \be
        k_\beta(x,x) =  \frac{(2\rho)^{1/4}}{\sqrt{2\pi \sinh(\beta\sqrt{2\rho} )}}
            \exp\Bigl( -\frac{x^2}{2\sigma(\beta,\rho)^2}\Bigr)
    \ee
    where
    \be \nn
        \frac{1}{2\sigma(\beta,\rho)^2} = \sqrt{2\rho} \tanh\Bigl(\beta \sqrt{\frac{\rho}{2}}  \Bigr).
    \ee
    The trace of $\exp(- \beta A)$ is
    \be \nn
        c(\beta,\rho) = \Tr \exp( - \beta A) = \frac{1}{2 \sinh(\beta \sqrt{\rho/2})}.
    \ee
    The law of $(\gamma(t_0),\ldots,\gamma(t_r))$ has a density proportional to
    \be \nn
        k_{t_1}(x_0,x_1)k_{t_2 - t_1} (x_1,x_2)\cdots k_{t_r- t_{r-1}}(x_{r-1},x_r)
            k_{\beta - t_r}(x_r,x_0),
    \ee
    which is a Gaussian. In particular, $\gamma(0)$ has a density proportional to $k_\beta(x,x)$, hence is a Gaussian with variance $\sigma(\beta,\rho)^2$.
\end{proof}

\begin{remark}
    Let us mention that $\nu$ is a mixture of Ornstein-Uhlenbeck bridges. This is because of the well-known relation between the Ornstein-Uhlenbeck process and the harmonic oscillator (see \cite[Thm 4.7]{simon1979functional}). We have, for example,
    \be
        - e^{ x^2\sqrt{\rho/2}} \Bigl( A - \sqrt{\frac{\rho}{2}}\mathrm{id}\Bigr)e^{-x^2\sqrt{\rho/2}} f = \bigl( \frac{1}{2} \frac{\dd^2}{\dd x^2} -\sqrt{2 \rho}\, x \frac{\dd}{\dd x} \bigr) f.
    \ee
    and we recognize the infinitesimal generator of the Markov process associated with the stochastic differential equation $\dd X_t = - \sqrt{2\rho}  X_t \dd t  + \dd B_t$.
\end{remark}

\paragraph{Acknowledgments} We thank M. Aizenman for posing this problem to us, and for pointing out references.  S. Jansen thanks the Weierstrass Institute, DFG Forschergruppe 718 ``Analysis and Stochastics in Complex
Physical Systems'', and ERC Advanced Grant 267356 VARIS of F. den Hollander for financial support. We also thank W. K{\"o}nig for kind hospitality  extended to P. Jung and for financial support  from the Weierstrass Institute where part of this work was done.

\bibliographystyle{alpha}
\bibliography{biblioMarch2013}

\end{document}